\DeclareMathOperator{\id}{Id}
\DeclareMathOperator{\Aut}{Aut}
\DeclareMathOperator{\Pol}{Pol}
\DeclareMathOperator{\Csp}{CSP}
\DeclareMathOperator{\Nsp}{NSP}
\DeclareMathOperator{\Forb}{Forb}
\DeclareMathOperator{\pol}{Pol}
\DeclareMathOperator{\proj}{Proj}
\title{Hardness of Network Satisfaction for Relation Algebras with Normal
	Representations}
\author{Manuel Bodirsky\thanks{The author has received funding from the European Research Council (Grant Agreement no. 681988, CSP-Infinity)} 
\and Simon Kn\"auer \thanks{The author is supported by DFG Graduiertenkolleg 1763 (QuantLA).}}
\institute{Institut f\"{u}r Algebra, TU Dresden, 01062 Dresden, Germany}
\begin{document}

\setcounter{page}{1}

\maketitle
\begin{abstract}
We study the computational complexity 
of the general network satisfaction problem for
a finite relation algebra $A$ with a normal representation $B$.
If $B$ contains a non-trivial equivalence relation with a finite number of equivalence classes, then the network satisfaction problem for $A$ is NP-hard. As a second result, we prove hardness if $B$ has domain size at least three and contains no
non-trivial equivalence relations but a symmetric
atom $a$ with a forbidden triple $(a,a,a)$, that is, 
$a \not\leq a \circ a$. We illustrate how to apply 
our conditions on two small relation algebras. 
\end{abstract}

\section{Introduction}
\par 
Many computational problems in temporal and spatial reasoning can be formulated as network
satisfaction problems for a fixed finite relation algebra~\cite{Duentsch,NebelRenzSurvey,Qualitative-Survey}. Famous examples of finite relation algebras that have been studied in this context are the Point Algebra, the Left Linear Point Algebra, 
Allen's Interval Algebra, RCC5, and RCC8, just to name a few; much more material about relation algebras can be found in~\cite{HirschHodkinson}. Robin Hirsch~\cite{Hirsch} asked in 1996 the \emph{Really Big Complexity Problem (RBCP)}: can we classify the computational complexity of the network satisfaction problem for every finite relation algebra? For example, the network satisfaction problem for the Point Algebra and the Left Linear Point Algebra are polynomial-time tractable~\cite{PointAlgebra,BodirskyKutzAI}, while it is NP-complete for the other relation algebras mentioned above~\cite{Allen,NebelRenz}. A finite relation algebra with an undecidable network satisfaction problem has been found by Hirsch~\cite{Hirsch-Undecidable}.

An important notion in the theory of representability of finite relation algebras are \emph{normal representations}, i.e., representations that are fully universal, square, and homogeneous~\cite{Hirsch}. The network satisfaction problem
for a relation algebra with a normal representation can be seen as the constraint satisfaction problem
for an infinite structure $\mathfrak B$ that is homogeneous and finitely bounded (these concepts from model theory will be introduced in Section~\ref{sect:normal}). The network satisfaction problem is in this case in NP and a complexity dichotomy has been conjectured~\cite{BPP-projective-homomorphisms}.
There is even a promising candidate condition for the boundary between NP-completeness and containment in P; the condition can be phrased in several equivalent ways~\cite{BKOPP,BodirskyRamics}. 
However, this conjecture has not yet been verified 
for the homogeneous finitely bounded structures
that arise as the normal representation 
of a finite relation algebra.

We present
some first steps towards a solution to the RBCP for relation algebras $\bf A$ with a normal representation $\mathfrak B$. Our approach
is to study the automorphism group $\Aut({\mathfrak B})$ of $\mathfrak B$ and to identify properties 
that imply hardness. 
Because of the homogeneity of $\mathfrak B$, 
one can translate
back and forth between properties of $\bf A$ 
and properties of $\Aut({\mathfrak B})$. 
For example, $\Aut({\mathfrak B})$
is primitive if and only if $\bf A$ contains no
 equivalence relation which is different from
 the \emph{trivial} equivalence relations 
 $\id$ and $1$. 
Specifically, we show that the network
satisfaction problem for $\bf A$ is NP-complete if 
\begin{itemize}
\item $\Aut({\mathfrak B})$
is primitive, $|B|>2$ and $\bf A$ has a symmetric
atom $a$ with a forbidden triple $(a,a,a)$, that is, 
$a \not\leq a \circ a$ (Section~\ref{sect:fin});
\item $\Aut({\mathfrak B})$ has a congruence 
with at least two but finitely many equivalence classes (Section~\ref{sect:primitive}). 
\end{itemize}

In our proof we use the so-called \emph{universal-algebraic approach} which has recently led to a full classification of the computational complexity of 
constraint satisfaction problems for $\mathfrak B$ if the domain of $\mathfrak B$ is \emph{finite}~\cite{BulatovFVConjecture,ZhukFVConjecture}. The central insight is
that the complexity of the CSP is for finite $\mathfrak B$
fully determined by the \emph{polymorphism clone}
$\Pol(\mathfrak B)$ of $\mathfrak B$. This result extends to 
homogeneous structures with finite relational signature (more generally, to $\omega$-categorical structures~\cite{BodirskyNesetrilJLC}). 
Both of our hardness proofs come from the technique of factoring $\Pol({\mathfrak B})$ with respect
to a congruence with finitely many classes,
and using known hardness conditions from corresponding finite-domain constraint satisfaction problems. 
The article is fully self-contained: we introduce the network satisfaction problem (Section~\ref{sect:nsp}), normal representations (Section~\ref{sect:normal}), 
and the universal algebraic approach (Section~\ref{sect:ua}). 


\section{The (General) Network Satisfaction Problem}
\label{sect:nsp}
Network satisfaction problems have been introduced in~\cite{LadkinMaddux}, capturing
well-known computational problems, e.g., for Allen's Interval Algebra~\cite{Allen}; see~\cite{Duentsch} for a survey. 
An \emph{algebra} in the sense of universal algebra is a set together with operations on this set, each equipped with an arity $n \in {\mathbb N}$. In this context, operations of arity zero are viewed as constants. The \emph{type} of an algebra is a tuple that represents  the arities of the operations. 
For the definitions concerning relation algebras, we  basically follow \cite{Maddux2006}. 

\begin{definition}
	Let $D$ be a set and $E \subseteq D^2$ an equivalence relation. Let  $(\mathcal{P}(E); \cup,   \bar{} , 0,1,\id,  ^\smile, \circ ) $ be an algebra of type $(2,1,0,0,0,1,2)$ with the following operations:\begin{enumerate}
		\item $A \cup B :=  \{(x,y)  \mid (x,y)\in A \textup{~or~} (x,y)\in B \} $,
		\item $\bar{A}:= E \setminus A $,
		\item $0:= \emptyset$,
		\item $1:= E$,
		\item $\id:= \{(x,x) \mid x\in D \} $,
		\item $A^\smile := \{(x,y) \mid (y,x)\in A  \}$,
		\item $A\circ B := \{(x,z) \mid \exists y \in D: (x,y)\in A  \textup{~ and~ } (y,z)\in B \}$.
	\end{enumerate}
	A subalgebra of $(\mathcal{P}(E); \cup,   \bar{} , 0,1,\id,  ^\smile, \circ)$ 
	is called a \emph{proper relation algebra}.
\end{definition}

A \emph{representable relation algebra} is an algebra of type $(2,1,0,0,0,1,2)$ that is isomorphic (as an algebra) to a proper relation algebra.
We denote algebras by bold letters, like $\mathbf{A}$; the underlying domain of an algebra $\mathbf{A}$ is denoted with the regular letter $A$. An algebra
$\bf A$ is finite if $A$ is finite. 
We do not need the more general definition of an \emph{(abstract) relation algebra} (for a definition see for example \cite{Maddux2006}) because the
	network satisfaction problem for
	relation algebras that are not 
	representable is trivial. 
We use the language of model theory to define \emph{representations} of relation algebras;
the definition is essentially the same as the one given in~\cite{Maddux2006}. 


\begin{definition}
	A relational structure $\mathfrak{B}$  is called a \emph{representation} of a relation algebra $\mathbf{A}$ if 
	\begin{itemize}
	\item $\mathfrak{B}$ is an $A$-structure
		with domain $B$ (i.e., each element $a \in A$ is used as a relation symbol denoting a binary relation $a^{\mathfrak B}$ on $B$);
		\item there exists an equivalence relation $E \subseteq B^2$ such that the set of relations of $\mathfrak B$ is the domain of a subalgebra of
		$(\mathcal{P}(E); \cup,   \bar{} , 0,1,\id,  ^\smile, \circ )$; 
		\item the map that sends $a \in A$
		to $a^{\mathfrak B}$ is an isomorphism between
		$\bf A$ and this subalgebra. 
	\end{itemize}
\end{definition}

\begin{remark}
	For a relation algebra $\mathbf{A}=(A; \cup,   \bar{} , 0,1,\id,  ^\smile, \circ )$ the algebra $(A; \cup,   \bar{} , 0,1) $ is a Boolean algebra. 
	With respect to this algebra there is a partial ordering on the elements of a relation algebra. We denote this with $\subseteq$ since in proper relation algebras this ordering is with respect to set inclusion. 
	The minimal non-empty relations with respect to $\subseteq$ are called the \emph{atomic relations} or \emph{atoms}; we denote the set of atoms of $\bf A$ by $A_0$. 
\end{remark}


\begin{definition}
	Let $\mathbf{A}$ be a relation algebra. An $\mathbf{A}$-network $(V;f)$  is a finite set of nodes $V$ together with a function $f\colon V \times V \rightarrow A$. 
	
	Let $\mathfrak{B}$ be a representation of $\mathbf{A}$. An $\mathbf{A}$-network $(V;f)$  is \emph{satisfiable in $\mathfrak{B}$} if there exists an assignment $s\colon V \rightarrow B$ such that for all $x,y \in V$
	$$(s(x),s(y)) \in f(x,y) ^\mathfrak{B}.$$
	An $\mathbf{A}$-network $(V;f)$  is \emph{satisfiable} if there exists some representation $\mathfrak{B}$ of $\mathbf{A}$ such that  $(V;f)$  is satisfiable in $\mathfrak{B}$.
\end{definition}


\begin{definition}
	The \emph{(general) network satisfaction problem} for a finite relation algebra $\mathbf{A}$,
	denoted by $\Nsp(\bf A)$, is the problem of deciding whether a given $\mathbf{A}$-network is satisfiable. 
\end{definition}

\section{Normal Representations and CSPs}
\label{sect:normal}
We recall a connection between network satisfaction problems and constraint satisfaction problems that is presented in more detail  in~\cite{Qualitative-Survey,BodirskyRamics}. 

\begin{definition}[from \cite{Hirsch}]
	Let $\mathbf{A}$ be a relation algebra. An $\mathbf{A}$-network $(V;f)$  is called \emph{atomic} if the image of $f$ only contains atoms and if $$f(a,c)\subseteq f(a,b) \circ f(b,c).$$
	\end{definition}

The last line ensures a ``local consistency'' of the atomic $\mathbf{A}$-network with respect to the multiplication rules in the relation algebra $\mathbf{A}$. This property is in the literature sometimes called ``closedness'' of an $\mathbf{A}$-network \cite{HirschAlgebraicLogic}.

\begin{definition}[from \cite{Hirsch}]
	A representation $\mathfrak{B}$ of a relation algebra $\mathbf{A}$ is called 
	\begin{itemize}
		\item \emph{fully universal} if every atomic $\mathbf{A}$-network is satisfiable in $\mathfrak{B}$;
		\item \emph{square} if $1^\mathfrak{B}=B^2$;
		\item  \emph{homogeneous} if every isomorphism of finite substructures of $\mathfrak{B}$ can be extended to an automorphism;
		\item	\emph{normal} if it is fully universal, square and homogeneous.
	\end{itemize}

\end{definition}

If a relation algebra $\mathbf{A}$ has a normal representation $\mathfrak{B}$ then the problem 
of deciding whether an $\mathbf{A}$-network is satisfiable in \emph{some} representation reduces to a question whether it is satisfiable in the concrete representation $\mathfrak{B}$. Such decision problems are known as constraint satisfaction problems, which are formally defined in the following.

\begin{definition}
	Let $\mathfrak{B} $ be a $\tau$-structure for a finite relational signature $\tau$. The \emph{constraint satisfaction problem} of $\mathfrak{B}$ is the problem of deciding for a given finite $\tau$-structure $\mathfrak{C}$ whether there exists a homomorphism from $\mathfrak{C}$ to $\mathfrak{B}$.
\end{definition}

To formulate the connection between NSPs and CSPs, we have to give a translation between networks and structures. On the one hand we may view an $\mathbf{A}$-network $(V;f)$ as an $A$-structure $\mathfrak C$ with domain $C := V$ where 
$(a,b) \in f(a,b)^{\mathfrak C}$. 
 On the other hand we can transform an $A$-structure $\mathfrak{C}$ into an $\mathbf{A}$-network $(V;f)$ with $V = C$ and by defining the network function $f(x,y)$ for $x,y \in C$ as follows: 
let $X$ be the set of all relations that hold on $(x,y)$ in $\mathfrak{C}$. If $X$ is non-empty we define $f(x,y):= \bigcup X$; otherwise $f(x,y):= 1$.

\begin{proposition}[see~\cite{BodirskyRamics}]
	Let $\mathfrak{B}$ be a normal representation of a finite relation algebra $\mathbf{A}$.
	Then $\Nsp(\mathbf{A})$ and $\Csp(\mathfrak{B})$ are the same problem (up to the translation showed above).
\end{proposition}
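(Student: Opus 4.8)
The plan is to prove that the two decision problems coincide by establishing, for a given $\mathbf{A}$-network $(V;f)$ and its associated $A$-structure $\mathfrak{C}$, the chain of equivalences: $(V;f)$ is satisfiable $\iff$ $(V;f)$ is satisfiable in the fixed representation $\mathfrak{B}$ $\iff$ there is a homomorphism $\mathfrak{C} \to \mathfrak{B}$. The second equivalence is essentially a matter of unwinding definitions, so the mathematical substance lies in the first equivalence, whose nontrivial direction is that satisfiability in \emph{some} representation already forces satisfiability in the one fixed representation $\mathfrak{B}$. This is exactly where full universality enters.

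First I would treat the easy inclusions. If $(V;f)$ is satisfiable in $\mathfrak{B}$, then since $\mathfrak{B}$ is itself a representation of $\mathbf{A}$, $(V;f)$ is satisfiable by definition. For the matching with the CSP, I would observe that a map $s\colon V \to B$ satisfies $(V;f)$ in $\mathfrak{B}$, i.e. $(s(x),s(y)) \in f(x,y)^{\mathfrak{B}}$ for all $x,y$, precisely when $s$ is a homomorphism from $\mathfrak{C}$ to $\mathfrak{B}$: by the definition of $\mathfrak{C}$ the only relation symbol asserted of a pair $(x,y)$ is $f(x,y)$, so preserving all relations is the same as the displayed membership condition. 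Here I would use that $\mathfrak{B}$ is square, so that the label $1$ (equivalently, the case where no relation is asserted on a pair) imposes the vacuous constraint $(s(x),s(y)) \in 1^{\mathfrak{B}} = B^2$; without squareness this case would wrongly force the pair into $E \subsetneq B^2$.

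The heart of the argument is to show that if $(V;f)$ is satisfiable in an arbitrary representation $\mathfrak{B}'$, with satisfying assignment $s'\colon V \to B'$, then it is already satisfiable in $\mathfrak{B}$. The idea is to refine $f$ to an atomic network and invoke full universality. Since $\mathbf{A}$ is a finite Boolean algebra, its atoms $A_0$ partition the top element $1$; as $f(x,y) \subseteq 1$ we have $(s'(x),s'(y)) \in f(x,y)^{\mathfrak{B}'} \subseteq 1^{\mathfrak{B}'}$, so there is a unique atom $g(x,y) \in A_0$ with $(s'(x),s'(y)) \in g(x,y)^{\mathfrak{B}'}$, and by minimality $g(x,y) \subseteq f(x,y)$. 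I would then check that $(V;g)$ is atomic: its image consists of atoms by construction, and for any $x,y,z$ the witness $s'(y)$ shows $(s'(x),s'(z)) \in (g(x,y) \circ g(y,z))^{\mathfrak{B}'}$ directly from the definition of $\circ$, whence $g(x,z) \subseteq g(x,y) \circ g(y,z)$ because $g(x,z)$ is the unique atom containing that pair. Full universality of $\mathfrak{B}$ now yields an assignment $s\colon V \to B$ satisfying $(V;g)$, and since $g(x,y) \subseteq f(x,y)$ is an inclusion in $\mathbf{A}$ and hence respected by the isomorphism onto the relations of $\mathfrak{B}$, the same $s$ satisfies $(V;f)$ in $\mathfrak{B}$.

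I expect the main obstacle to be the verification that the refined network $(V;g)$ is genuinely atomic --- in particular the local consistency inequality $g(x,z) \subseteq g(x,y) \circ g(y,z)$ --- together with the care needed to carry out the atom refinement in an \emph{arbitrary} representation $\mathfrak{B}'$, which need not be square; the argument only relies on the atoms partitioning $1^{\mathfrak{B}'}$ and on $f(x,y) \subseteq 1$. Once this is in place, chaining the equivalences and recalling that the reverse translation from an arbitrary $A$-structure to a network is the corresponding inverse construction completes the identification of the two problems.
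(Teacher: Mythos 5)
The paper does not prove this proposition itself (it only cites \cite{BodirskyRamics}), but your argument is precisely the standard one behind that citation: refine the network to an atomic one using the unique atom containing each pair under a satisfying assignment in an arbitrary representation, verify local consistency via the composition witness, and then invoke full universality (plus squareness for the homomorphism correspondence). Your proof is correct and complete for the purposes of this proposition, so there is nothing to add.
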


The following is an important notion in model theory and the study of infinite-domain CSPs.
Let $\mathcal F$ be a finite set of finite $\tau$-structures. Then $\Forb(\mathcal F)$ is the class of all finite $\tau$-structures that embed no $\mathfrak{C}\in \mathcal F$.
A  class $\mathcal{C}$ of finite $\tau$-structures is called \emph{finitely bounded} if $\mathcal{C}=\textup{Forb}({\mathcal F})$ for a finite set ${\mathcal F}$. 
A structure $\mathfrak{B}$ is called \emph{finitely bounded} if the class of finite structures that embed into $\mathfrak{B}$ is finitely bounded.

\begin{proposition}[see~\cite{BodirskyRamics}]
	Let $\bf A$ be a finite relation algebra with
	a normal representation $\mathfrak B$. 
	Then $\mathfrak{B}$ is finitely bounded and $\Csp(\mathfrak{B})$ and $\Nsp(\mathbf{A})$ are in NP.
\end{proposition}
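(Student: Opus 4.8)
The plan is to prove two claims: first, that a normal representation $\mathfrak{B}$ is finitely bounded; and second, that this implies both $\Csp(\mathfrak{B})$ and $\Nsp(\mathbf{A})$ lie in NP.

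For the finite boundedness, I would exploit the homogeneity of $\mathfrak{B}$ together with the fact that $\mathbf{A}$ is a \emph{finite} relation algebra. A homogeneous structure is determined up to isomorphism by its age (the class of finite structures embedding into it), and for homogeneous structures over a finite relational signature the age is characterized by the finitely many local conditions coming from the algebra's multiplication and converse tables. Concretely, I expect the bound set $\mathcal{F}$ to consist of all $A$-structures on at most three elements that violate one of the relation-algebra laws witnessed by $\mathfrak{B}$: a structure should be forbidden if on some pair $(x,y)$ no atom holds or two distinct atoms hold simultaneously (so that every pair carries exactly one atom), if the pair $(x,x)$ does not carry $\id$, if the labels on $(x,y)$ and $(y,x)$ are not $\smile$-related, or if on some triple $(x,y,z)$ the atom on $(x,z)$ is not below $f(x,y)\circ f(y,z)$. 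The key step is to show that a finite $A$-structure embeds into $\mathfrak{B}$ if and only if it avoids all members of $\mathcal{F}$, i.e., if and only if it is (isomorphic to) an atomic $\mathbf{A}$-network. The forward direction is immediate since $\mathfrak{B}$ is a representation and hence satisfies all these laws. The backward direction is exactly where full universality of the normal representation is used: every atomic $\mathbf{A}$-network is satisfiable in $\mathfrak{B}$, which means precisely that every finite structure avoiding $\mathcal{F}$ admits a homomorphism, and homogeneity upgrades satisfiability into an embedding. Since $A$ is finite, $\mathcal{F}$ is a finite set of finite structures, so $\mathfrak{B}$ is finitely bounded.

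For the NP membership, I would first invoke the Proposition stated just above, which identifies $\Nsp(\mathbf{A})$ with $\Csp(\mathfrak{B})$, so it suffices to bound $\Csp(\mathfrak{B})$. The natural certificate for a ``yes'' instance $\mathfrak{C}$ is a homomorphism $h\colon \mathfrak{C}\to\mathfrak{B}$, but $\mathfrak{B}$ may be infinite, so a raw image in $B$ is not obviously a polynomial-size witness. The remedy is to certify satisfiability \emph{combinatorially} rather than by exhibiting points of $B$: a guessed witness is an atomic refinement of the input, i.e., a function $g$ assigning to each pair of the finite domain a single atom of $A$ consistent with the constraints of $\mathfrak{C}$ and closed under the composition condition $g(x,z)\le g(x,y)\circ g(y,z)$. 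Such a refinement has size polynomial in the input and can be verified in polynomial time using the (finite, fixed) multiplication table of $\mathbf{A}$. By full universality the corresponding atomic network is satisfiable in $\mathfrak{B}$, and conversely any satisfying assignment induces such a refinement; hence the existence of a valid $g$ is equivalent to satisfiability.

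The main obstacle is the backward direction of the embedding characterization, namely turning the \emph{homomorphism} guaranteed by full universality into an honest \emph{embedding} witnessing membership in the age, which is what finite boundedness requires. Full universality only gives a satisfying assignment $s$, and $s$ need not be injective, so its image is merely a homomorphic, not necessarily induced, copy of the network. Here I would use that the network is atomic: since every pair $(x,y)$ carries exactly one atom and $\id$ is an atom holding only on the diagonal, distinct nodes receive distinct images (their connecting atom is not $\id$), so $s$ is automatically injective, and because each pair is labelled by a single atom the map is relation-preserving in both directions, i.e., a genuine embedding. Once this is secured, homogeneity and the finiteness of $A$ deliver the finite bound set cleanly, and the NP upper bound follows from the polynomial-size atomic certificate described above.
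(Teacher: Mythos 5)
The paper offers no proof of this proposition at all --- it is imported from the cited reference --- so there is nothing internal to compare against; your route (characterize the age of $\mathfrak{B}$ by finitely many forbidden substructures using full universality, and certify satisfiability by a guessed atomic refinement checked against the fixed composition table) is the standard one. The NP part in particular is complete and correct: an atomic refinement has polynomial size, is verifiable against the finite tables of $\mathbf{A}$, is a sound certificate by full universality, and can be read off from any satisfying assignment because $\mathfrak{B}$ is square.

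There is, however, a concrete defect in your bound set $\mathcal{F}$: with $\mathcal{F}$ exactly as you describe it, the equivalence ``embeds into $\mathfrak{B}$ if and only if it avoids $\mathcal{F}$'' is false. Two counterexamples. First, the two-element $A$-structure in which the off-diagonal pairs carry only the atom $\id$ avoids every bound you list (one atom per pair, converse-compatible, composition-closed), yet it does not embed into $\mathfrak{B}$, where $\id^{\mathfrak{B}}$ is the genuine diagonal; this is precisely the case your own injectivity argument silently excludes when you assert that ``their connecting atom is not $\id$.'' Second, your bounds constrain only the atoms, but the signature of $\mathfrak{B}$ is all of $A$: in an induced substructure of $\mathfrak{B}$ a pair carrying atom $a$ also carries every $b\in A$ with $a\subseteq b$ and no other relation, so a two-element structure whose pair carries $a$ but fails to carry $a\cup b$ avoids your $\mathcal{F}$ without embedding into $\mathfrak{B}$. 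Both defects are repaired by adding finitely many further one- and two-element bounds (forbid $\id$ between distinct elements; forbid any pair whose set of relations is not the principal up-set of a single atom), after which the argument goes through. Note also that homogeneity plays no role at this point: injectivity together with the up-set condition already turns the satisfying assignment into an embedding, so full universality and squareness are the properties actually doing the work.
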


\section{The Universal Algebraic Approach}
\label{sect:ua}
This section gives a short overview of the important notions and concepts for the universal-algebraic approach to the computational complexity of CSPs. 

\subsection{Clones}
We start with the definition of an operation clone. 

\begin{definition}
	Let $B$ be some set. Then $\mathcal{O}^{(n)}_B$ denotes the set of $n$-ary operations on $B$ and $\mathcal{O}_B:= \bigcup_{n\in \mathbb{N}} \mathcal{O}^{(n)}_B$.
	A set $\mathscr{C} \subseteq \mathcal{O}_B $ is called a \emph{operation clone (on $B$)} if it contains all projections and is closed under composition, that is, for every $ f\in\mathscr{C} \cap \mathcal{O}^{(k)}_B $ and all $g_1,\ldots, g_k \in \mathscr{C}\cap \mathcal{O}^{(n)}_B $ the $n$-ary operation $f(g_1,\ldots,g_k)$ with
	$$f(g_1,\ldots,g_k)(x_1,\ldots, x_n) :=  f(g_1(x_1,\ldots, x_n)     ,\ldots,g_k(x_1,\ldots, x_n) )     $$
	is also in $\mathscr{C}$. 
	We denote  the $k$-ary operations of $\mathscr{C}$ by $\mathscr{C}^{(k)}$. 
\end{definition}

\begin{definition}
	Let $\mathfrak{B}$ be a relational structure.
	Then $f$ \emph{preserves} a relation $R$ of $\mathfrak{B}$ if the component-wise application of $f$ on tuples $r_1, \ldots, r_k \in R$ results in a tuple of the relation. If $f$ preserves all relations of $\mathfrak B$ then $f$ is called a \emph{polymorphism} of $\mathfrak B$. 
	The set of all polymorphisms of  arity $k \in \mathbb{N}$ is denoted by $\pol^{(k)}(\mathfrak{B})$ and $\pol(\mathfrak{B}) := \bigcup_{k \in {\mathbb N}} \pol^{(k)}(\mathfrak{B})$ is called the \emph{polymorphism clone of $\mathfrak{B}$}.
\end{definition}

Polymorphisms are closed under the composition and a projection is always a polymorphism, therefore a polymorphism clone is indeed an operation clone.

\begin{definition}
Let $\mathscr{C}$ and $\mathscr{D}$ be operation clones.  
A function $\mu \colon \mathscr{C} \rightarrow \mathscr{D}$ is called \emph{minor-preserving}
if it maps every operation to an operation of the same arity and 
satisfies for every $ f\in\mathscr{C}^{(k)}$  and all projections $p_1,\ldots, p_k $ the following identity:
$$\mu(f(p_1,\ldots,p_k))= \mu(f)(p_1,\ldots,p_k).$$
\end{definition}

Operation clones $\mathscr C$ on countable sets $B$ can be equipped with the following complete ultrametric $d$. Assume that $B=\mathbb{N}$.  For two polymorphisms $f$ and $g$ of different arity we define $d(f,g) =1$. If $f$ and $g$ are both of arity $k$ we have
$$d(f,g) :=2^{- \min\{n\in \mathbb{N} \mid \exists s \in \{1,\ldots, n \}^k : f(s)\not = g(s) \}}.$$

The following is a straightforward consequence of the definition. 

\begin{lemma}\label{u.c.}
	Let $\mathscr{D}$ be an operation clone on $B$ and $\mathscr{C}$ an operation clone on $C$ and let $\nu \colon \mathscr{D}\rightarrow \mathscr{C}$ a map. Then $\nu$ is uniformly continuous (u.c.) if and only if
	$$\forall n \geq 1 \; \exists \textup{~finite~} F\subset D \forall f,g \in  \mathscr{D}^{(n)} : f|_F =g|_F \Rightarrow \nu(f)=\nu(g).$$
\end{lemma}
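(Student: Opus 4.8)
The plan is to prove the equivalence by directly unfolding the definition of uniform continuity for the metric $d$ and translating the metric balls into the combinatorial condition on finite sets. Throughout I use the standing assumption $B = \mathbb{N}$, so that any finite $F \subset B$ is contained in an initial segment $\{1,\ldots,m\}$, and conversely; hence the hypothesis $f|_F = g|_F$ may be taken, without loss of generality, to be of the form $f|_{\{1,\ldots,m\}} = g|_{\{1,\ldots,m\}}$. The single observation driving both directions is the following restatement of the metric: for two operations $f,g \in \mathscr{D}^{(n)}$ of the same arity $n$, one has $d(f,g) < 2^{-m}$ if and only if $f$ and $g$ agree on every tuple in $\{1,\ldots,m\}^n$, i.e.\ if and only if $f|_{\{1,\ldots,m\}} = g|_{\{1,\ldots,m\}}$. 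Moreover, since operations of distinct arities are at distance $1$, any threshold $\delta \le 1$ already forces $f$ and $g$ to have the same arity, so the whole analysis reduces to one fixed arity $n$ at a time.

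For the direction from uniform continuity to the right-hand condition, I would fix $n \ge 1$ and apply uniform continuity with a target resolution $\epsilon$ chosen fine enough that $d(\nu(f),\nu(g)) < \epsilon$ forces $\nu(f) = \nu(g)$; this yields some $\delta$, which by the restatement above corresponds to agreement on an initial segment $\{1,\ldots,m\}$. Setting $F := \{1,\ldots,m\}$, the chain $f|_F = g|_F \Rightarrow d(f,g) < \delta \Rightarrow \nu(f) = \nu(g)$ is exactly the assertion required for this $n$. For the converse, given $\epsilon > 0$ I would, for each arity $n$ under consideration, take the finite set $F$ furnished by the right-hand condition, put $m := \max F$, and set $\delta := 2^{-m}$; then $d(f,g) < \delta$ implies $f|_F = g|_F$, hence $\nu(f) = \nu(g)$ and in particular $d(\nu(f),\nu(g)) = 0 < \epsilon$, establishing uniform continuity.

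The step requiring the most care is the passage, in the forward direction, from the inequality $d(\nu(f),\nu(g)) < \epsilon$ to the exact identity $\nu(f) = \nu(g)$: this is where one must exploit the concrete shape of the metric on the target clone $\mathscr{C}$, choosing $\epsilon$ fine enough that closeness at the chosen resolution already pins the image operation down uniquely rather than merely approximately. The remaining difficulty is the purely bookkeeping matching of the quantifier alternation $\forall n\,\exists F\,\forall f,g$ against $\forall \epsilon\,\exists \delta\,\forall f,g$; here the reduction to a single fixed arity, made possible by the fact that operations of different arities lie at distance $1$, is precisely what keeps the two formulations in step and lets the finite set $F$ and the threshold $\delta$ be matched arity by arity.
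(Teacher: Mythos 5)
The paper itself gives no proof of this lemma (it is declared ``a straightforward consequence of the definition''), so the only question is whether your argument stands on its own; it does not, and the failure is exactly at the step you dismiss as ``purely bookkeeping''. In the direction from the displayed condition to uniform continuity you pick, \emph{for each arity} $n$, the finite set $F = F_n$ furnished by the condition and set $\delta := 2^{-\max F_n}$. This $\delta$ depends on $n$, whereas uniform continuity with respect to the paper's metric has the quantifier pattern $\forall \epsilon \, \exists \delta \, \forall f,g$: a \emph{single} $\delta$ must serve pairs of every arity simultaneously. The fact that operations of distinct arities lie at distance $1$ only says that, for $\delta \leq 1$, no pair of distinct arities is ever $\delta$-close; it does not allow the threshold for same-arity pairs to depend on that common arity. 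If $\max F_n$ is unbounded in $n$, your construction produces no admissible $\delta$, and this cannot be repaired for arbitrary maps, because the implication is then simply false: take $\mathscr{D} = \mathcal{O}_{\mathbb{N}}$, $\mathscr{C} = \proj$, and let $\nu$ send an $n$-ary $f$ to the first or the second $n$-ary projection according to whether $f(n,n,\dots,n)$ is even or odd. The right-hand condition holds with $F_n = \{n\}$, yet for every $m \geq 1$ there are $(m+1)$-ary $f,g$ agreeing on $\{1,\dots,m\}$ with $f(m+1,\dots,m+1)$ even and $g(m+1,\dots,m+1)$ odd, so $d(f,g) \leq 2^{-(m+1)}$ while $\nu(f)$ and $\nu(g)$ are distinct projections at distance $2^{-2}$ independent of $m$; hence $\nu$ is not uniformly continuous. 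What rescues the lemma in the paper's actual uses is an ingredient your proof never invokes: either the set $F$ can be chosen independently of $n$ (as happens in Theorem~\ref{hardness:finitely many classes}, where $F = \{c_1,\dots,c_m\}$ works for all arities, making $\delta := 2^{-\max F}$ uniform), or one uses that $\nu$ is minor-preserving and $C$ is finite, in which case the single set $F_1 \cup \dots \cup F_{|C|}$ works for all arities because an operation on a finite set $C$ of arity greater than $|C|$ is determined by its minors of arity at most $|C|$.

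The forward direction has a second, related gap. You need an $\epsilon$ so small that $d(\nu(f),\nu(g)) < \epsilon$ forces $\nu(f) = \nu(g)$. Such an $\epsilon$ exists precisely when distinct same-arity operations of $\mathscr{C}$ are uniformly separated, which is true when $C$ is finite (take $\epsilon = 2^{-|C|}$) but false when $C$ is infinite, since then distinct operations can agree on arbitrarily long initial segments and hence be arbitrarily close. Indeed, for infinite $C$ this direction of the lemma fails outright: the identity map on $\mathcal{O}_{\mathbb{N}}$ is uniformly continuous but satisfies the right-hand side for no finite $F$ and no $n$. So a correct proof must add the hypothesis that $C$ is finite (which holds in every application the paper makes of the lemma, where the target is a clone on the equivalence classes, on $\{a,\id\}$, or $\proj$) and, for the converse direction, must either strengthen the condition so that $F$ is independent of $n$ or restrict to minor-preserving $\nu$; as written, your argument establishes neither implication in the stated generality.
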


In order to demonstrate the use of polymorphisms in the study of CSPs we have to define primitive positive formulas. Let $\tau$ be a relational signature. A first-order formula $\varphi(x_1,\ldots,x_n)$ is called \emph{primitive positive} if it has the  form
$$ \exists x_{n+1},\ldots, x_{m} (\varphi_1\wedge \cdots \wedge \varphi_s)$$
where $\varphi_1, \ldots, \varphi_s$ are atomic formulas, i.e., formulas of the form $R(y_1,\ldots, y_l)$ for $R\in \tau$ and $y_i \in \{x_1,\ldots, x_m\}$, of the form $y=y'$ for $y,y' \in \{x_1,\ldots x_m\}$, or of the form \emph{false} and \emph{true}. 
We have the following correspondence between polymorphisms and primitive positive formulas (or relations that are defined by them). Note that all of the statements in the following hold in a more general setting, but we only state them here for normal representations of finite relation algebras. 

\begin{theorem}[follows from~\cite{BodirskyNesetrilJLC}]\label{polinv}
	Let $\mathfrak{B}$ be a normal representation of a finite relation algebra $\mathbf{A}$. Then the set of primitive positive definable relations in $\mathfrak{B}$ is exactly the set of relations that are preserved by $\pol(\mathfrak{B})$.
\end{theorem}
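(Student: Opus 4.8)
The plan is to reduce the statement to the general polymorphism--pp-definability Galois connection for $\omega$-categorical structures established in~\cite{BodirskyNesetrilJLC}. Thus the genuine work in this particular setting is to check that a normal representation $\mathfrak B$ falls inside that framework, after which the two inclusions follow from (respectively) a direct computation and the cited theorem. I would first isolate the $\omega$-categoricity of $\mathfrak B$, then treat the easy inclusion by hand, and finally discuss the harder inclusion and why it relies on oligomorphicity.

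First I would establish that $\mathfrak B$ is $\omega$-categorical. By definition a normal representation is homogeneous, and its relational signature is the finite set $A$, each symbol denoting a binary relation. For a homogeneous structure two $n$-tuples lie in the same orbit of $\Aut(\mathfrak B)$ exactly when they realize the same quantifier-free type; over a finite relational signature there are only finitely many such types for each $n$, so $\Aut(\mathfrak B)$ is oligomorphic. By the theorem of Ryll--Nardzewski this is equivalent to $\omega$-categoricity of $\mathfrak B$.

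Next I would verify the easy inclusion, that every pp-definable relation is preserved by $\Pol(\mathfrak B)$. Let $R$ be defined by $\exists x_{n+1},\dots,x_m\,(\varphi_1\wedge\cdots\wedge\varphi_s)$, let $f\in\Pol^{(k)}(\mathfrak B)$, and let $r_1,\dots,r_k\in R$. Each $r_j$ extends, via the existential witnesses, to a tuple $\bar r_j\in B^m$ satisfying every conjunct $\varphi_i$. Applying $f$ coordinatewise to $\bar r_1,\dots,\bar r_k$ produces a tuple $\bar t\in B^m$; since $f$ preserves every relation named in the $\varphi_i$ and trivially respects equalities and the atomic formulas \emph{true} and \emph{false}, the tuple $\bar t$ again satisfies all conjuncts, so its restriction to the first $n$ coordinates witnesses $f(r_1,\dots,r_k)\in R$.

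The converse inclusion is the substantial part, and the step I expect to be the main obstacle, since here $\omega$-categoricity is indispensable. Because $\Aut(\mathfrak B)$ is oligomorphic there are only finitely many $n$-ary relations definable without parameters, hence only finitely many pp-definable ones; as pp-definable relations are closed under intersection, there is a \emph{smallest} pp-definable relation $S\supseteq R$. It remains to show $S\subseteq R$. Assuming for contradiction some $t\in S\setminus R$, one constructs a polymorphism of $\mathfrak B$ that maps tuples of $R$ to $t$, contradicting the hypothesis that $R$ is closed under $\Pol(\mathfrak B)$. The construction is a compactness argument driven by oligomorphicity: using that $t$ satisfies every pp-formula holding throughout $R$, one obtains on each finite substructure a partial operation that preserves $\mathfrak B$ there and moves the relevant tuples of $R$ towards $t$; the finiteness of the orbit data that must be matched lets these partial operations be amalgamated, as a limit in the topology underlying the ultrametric $d$, into an actual polymorphism exhibiting $t$ as an image of tuples from $R$. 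This amalgamation is precisely the content of the Galois connection of~\cite{BodirskyNesetrilJLC}, which I would invoke to finish the argument.
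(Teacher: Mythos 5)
The paper offers no proof of this theorem beyond the citation to~\cite{BodirskyNesetrilJLC}, and your argument ultimately rests on the same citation, so the approach is essentially the same; your added verification that homogeneity over the finite signature $A$ yields oligomorphicity (hence $\omega$-categoricity via Ryll--Nardzewski), together with the direct proof of the easy inclusion, correctly supplies exactly what is needed to apply that result. Your sketch of the hard direction is a faithful outline of the compactness argument in the cited work, so there is nothing to object to.
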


A special type of polymorphism plays an important role in our analysis.
\begin{definition}
	Let $f$ be an $n$-ary operation on a countable set $X$. Then $f$ is called cyclic if 
	$$\forall x_1,\ldots x_n \in X : f(x_1,\ldots,x_n)= f(x_n,x_1 \ldots, x_{n-1}). $$
\end{definition}
%

We write $\proj$ for the operation clone on a two-element set that consists of only the projections. 

\begin{theorem}[from~\cite{Cyclic,wonderland}]\label{existence cyclic}
	Let $\mathscr{C}$ be an operation clone on a finite set $C$. If there exists no minor-preserving map $\mathscr{C} \rightarrow \proj$ then $\mathscr{C}$ contains for every prime $p >|C|$ a $p$-ary cyclic operation. 
\end{theorem}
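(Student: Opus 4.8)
The plan is to recognise this as the \emph{cyclic term theorem} and to assemble it from two ingredients: a translation of the hypothesis into a concrete idempotent Maltsev condition, and the loop lemma of absorption theory. First I would reformulate the assumption that there is no minor-preserving map $\mathscr C \to \proj$. By the theory of height-$1$ identities and the reflection results of~\cite{wonderland}, the existence of a minor-preserving map onto the projection clone is exactly the obstruction to $\mathscr C$ satisfying a nontrivial system of height-$1$ identities; hence its failure means that $\mathscr C$ satisfies some finite such system that $\proj$ does not. The standard massaging (Taylor's theorem together with the Mar\'oti--McKenzie argument) then turns any such witness into a single \emph{weak near-unanimity} operation $w \in \mathscr C$, so that from now on I may work with a concrete Taylor operation.

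Next, fix a prime $p > |C|$ and reduce the construction of a $p$-ary cyclic operation to a loop-existence statement. Let $\mathbf F = \mathbf F_{\mathscr C}(x_1,\dots,x_p)$ be the free algebra on $p$ generators; since $\mathscr C$ is finite this is a finite algebra, and the cyclic shift of generators $\sigma\colon x_i \mapsto x_{i+1 \bmod p}$ is an automorphism of $\mathbf F$ of order $p$. A $p$-ary operation is cyclic precisely when, viewed as an element $t \in \mathbf F$, it satisfies $\sigma(t)=t$; equivalently it is a \emph{loop} of the compatible digraph $\mathbb G = \{(t,\sigma(t)) : t \in \mathbf F\}$ on $\mathbf F$ (a subalgebra of $\mathbf F^2$, since $\sigma$ is an automorphism). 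Thus it suffices to produce a loop in $\mathbb G$, and the intended engine is the loop lemma: a finite smooth digraph of algebraic length $1$ that is preserved by a Taylor operation has a loop.

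The main obstacle is precisely the verification that makes the loop lemma applicable. The digraph $\mathbb G$ is the functional graph of the permutation $\sigma$, hence a disjoint union of directed cycles whose lengths divide $p$; if it has no loop it is a union of directed $p$-cycles, which has algebraic length $p$, not $1$. So the loop lemma cannot be invoked naively, and closing this gap is the entire difficulty of the theorem. Here I would follow Barto--Kozik~\cite{Cyclic}: argue by induction on $|C|$, and at each step use absorption theory to split into cases -- either a proper subuniverse absorbs $\mathbf F$ (or a suitable subdirect power), so that the inductive hypothesis yields a cyclic operation on a smaller algebra which is then lifted, or no such absorbing subuniverse exists, in which case the relevant cyclically invariant subdirect relation is linked and one passes to a power on which the induced smooth digraph \emph{does} have algebraic length $1$. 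It is exactly at this last point that the hypotheses $p$ prime and $p > |C|$ enter, through a pigeonhole argument on the at most $|C|$ distinct coordinate values that forces the greatest common divisor of the cycle lengths down to $1$. With algebraic length $1$ secured and $w$ preserving everything in sight, the loop lemma delivers the loop, i.e.\ the desired $p$-ary cyclic operation.

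One remark on idempotency: the absorption machinery is usually phrased for idempotent algebras, so if $\mathscr C$ is not idempotent I would first restrict to an appropriate reduct carrying the same height-$1$ obstruction and observe that a cyclic operation found there is automatically a cyclic operation of $\mathscr C$. In any case the genuinely deep component is the loop lemma and the surrounding absorption theory, which I would import from~\cite{Cyclic} rather than reprove; the role of the plan is only to organise the reduction so that these black boxes apply.
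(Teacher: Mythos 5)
The paper does not prove this theorem at all: it is imported verbatim from \cite{Cyclic,wonderland} as a black box, so there is no in-paper argument to compare yours against. Judged on its own terms, your outline is a faithful reconstruction of the proof that lives in those references: the translation of ``no minor-preserving map to $\proj$'' into a concrete nontrivial height-one condition is the \cite{wonderland} half, and the passage from a Taylor/weak-near-unanimity witness to $p$-ary cyclic operations for primes $p>|C|$ via absorption and the loop lemma is exactly the Barto--Kozik half. You also correctly identify the genuine difficulty -- the functional digraph of the cyclic shift on the free algebra has algebraic length $p$ when loopless, so the loop lemma cannot be applied directly, and the prime arity bound enters through the pigeonhole/linkedness step -- rather than pretending the loop lemma closes the argument by itself. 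The one point that is looser than it should be is your treatment of non-idempotency: for an abstract clone the subclone of idempotent operations need not inherit the height-one obstruction (it can collapse to the projections even when $\mathscr{C}$ itself admits no minor-preserving map to $\proj$), so ``restrict to an appropriate reduct'' has to be replaced by the reflection/pp-construction machinery of \cite{wonderland}, or by first passing to the core determined by the unary part; that is precisely why the theorem needs both citations. Since the paper itself treats the entire statement as imported, black-boxing the absorption theory is consistent with the paper's level of detail, and your sketch organizes the reduction correctly.
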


Note that every map between operation clones on finite domains is uniformly continuous.



\begin{theorem}[from~\cite{wonderland}]\label{hardness infinite ucmp}
	Let $\mathfrak{B}$ be normal representation of a finite relation algebra. If there is a uniformly continuous minor-preserving map $\pol(\mathfrak{B}) \rightarrow \proj$, then $\Csp(\mathfrak{B})$ is NP-complete. 
\end{theorem}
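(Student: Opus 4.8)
The plan is to build a polynomial-time reduction from a fixed NP-hard Boolean CSP to $\Csp(\mathfrak B)$; since $\Csp(\mathfrak B)$ is in NP by the earlier proposition on finitely bounded structures, this gives NP-completeness. For the hard source I would take $\mathfrak D := (\{0,1\}; R)$ with $R := \{(0,0,1),(0,1,0),(1,0,0)\}$, the positive $1$-in-$3$ relation. A short case check shows that every operation on $\{0,1\}$ preserving $R$ is a projection, so $\Pol(\mathfrak D) = \proj$, and $\Csp(\mathfrak D)$ is NP-complete by Schaefer's theorem. The task is therefore to transfer hardness from $\mathfrak D$ to $\mathfrak B$ along the hypothesised map $\xi \colon \Pol(\mathfrak B) \to \proj = \Pol(\mathfrak D)$.

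The mechanism for this transfer is the correspondence, established in~\cite{wonderland}, between uniformly continuous minor-preserving maps and primitive positive constructions: a u.c.\ minor-preserving map $\Pol(\mathfrak B) \to \Pol(\mathfrak D)$ witnesses that $\mathfrak B$ \emph{pp-constructs} $\mathfrak D$, meaning that $\mathfrak D$ is homomorphically equivalent to a structure that is primitive positively interpretable in $\mathfrak B$. Theorem~\ref{polinv} is the bridge that makes this possible for our $\mathfrak B$: it lets me pass between the clone $\Pol(\mathfrak B)$ and the primitive positive definable relations of $\mathfrak B$, which is exactly the data needed to read the interpreting pp-formulas off from $\xi$. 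Uniform continuity enters through Lemma~\ref{u.c.}: it ensures that $\xi$ is already determined by the restriction of operations to one finite set $F \subseteq B$, and this finiteness bounds the dimension of the interpretation and keeps the construction effective, just as in the purely finite case.

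With the pp-construction in hand I would invoke the standard fact that primitive positive constructions induce polynomial-time (indeed log-space) reductions between constraint satisfaction problems: an instance of $\Csp(\mathfrak D)$ is translated constraint-by-constraint into an equisatisfiable instance of $\Csp(\mathfrak B)$. Composing with the NP-completeness of $\Csp(\mathfrak D)$ yields NP-hardness of $\Csp(\mathfrak B)$, and with membership in NP the claimed NP-completeness follows.

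The main obstacle is the middle paragraph: showing that a uniformly continuous minor-preserving map into $\proj$ genuinely yields a pp-construction of a finite hard template over the \emph{infinite} domain $B$. Over a finite domain this is the textbook Inv--Pol correspondence, but for infinite $\mathfrak B$ one must guarantee that the reverse reading of $\xi$ produces honest primitive positive formulas rather than merely abstract clone data. This is precisely where uniform continuity is indispensable --- dropping it, $\xi$ could depend on the behaviour of polymorphisms ``at infinity'' and admit no finite syntactic description --- and the delicate point is to verify that the finite witness $F$ supplied by Lemma~\ref{u.c.} really does pin down the interpretation.
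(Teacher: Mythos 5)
This theorem is stated in the paper as an imported result from~\cite{wonderland}; the paper gives no proof of it, so there is no in-paper argument to compare yours against. Judged on its own terms, your outline has the right architecture and matches how the result is actually established in the cited reference: membership in NP comes from finite boundedness of $\mathfrak B$; the positive $1$-in-$3$ template $\mathfrak D$ does have $\Pol(\mathfrak D)=\proj$ (your case check is correct: none of the minimal nontrivial Boolean clones preserves $R$) and $\Csp(\mathfrak D)$ is NP-complete; and pp-constructions do induce polynomial-time reductions between CSPs.

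The problem is that the step you yourself flag as ``the main obstacle'' is not a verification to be carried out along the way --- it is essentially the whole content of the theorem. The implication ``a uniformly continuous minor-preserving map $\Pol(\mathfrak B)\rightarrow\proj$ exists $\Rightarrow$ $\mathfrak B$ pp-constructs every finite structure'' is the hard direction of the main theorem of~\cite{wonderland}, valid for $\omega$-categorical templates, and its proof goes through the machinery of reflections and h1 conditions; it does not follow from Theorem~\ref{polinv} together with Lemma~\ref{u.c.} in the direct way your middle paragraph suggests. Concretely: (i) you never record that $\mathfrak B$ is $\omega$-categorical, which is the hypothesis under which the equivalence holds --- it does hold here because a homogeneous structure in a finite relational signature is $\omega$-categorical, and this is precisely where the normality of the representation enters; (ii) the assertion that the finite set $F$ supplied by Lemma~\ref{u.c.} ``pins down the interpretation'' is exactly the claim that requires proof: one must use $F$ and the oligomorphicity of $\Aut(\mathfrak B)$ to produce a reflection of $\Pol(\mathfrak B)$ onto a clone over a finite set and then realise that reflection as a pp-power of $\mathfrak B$, and none of that is in your text. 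As written, the middle paragraph restates the conclusion rather than deriving it. If your intention is to cite~\cite{wonderland} for that implication, the argument is fine but then it reduces to the citation the paper already makes; if your intention is to prove the theorem, the proof is missing at its only genuinely difficult point.
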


\subsection{Canonical Functions}

Let $\mathfrak{B}$ be a normal representation of a finite relation algebra $\mathbf{A}$. 

\begin{definition}
	Let $a_1, \ldots, a_k \in A$. Then $(a_1, \ldots , a_k)^{\mathfrak B}$ denotes a binary relation on $B^k$ such that for $x,y \in B^k$ 
	$$(a_1, \ldots , a_k)^{\mathfrak B}(x,y)  :\Leftrightarrow  \bigwedge_{i\in \{1,\ldots, k\} }  a_i^\mathfrak{B}(x_i,y_i) .$$
\end{definition}

Recall that $A_0$ denotes the set of atoms of a  representable relation algebra $\mathbf{A}$.

\begin{definition}
	Let $x,y \in B^k$. Since $\mathfrak{B}$ is square 
	there are unique $a_1, \ldots, a_k \in A_0$
	such that $(a_1,\ldots,a_k)^{\mathfrak B}(x,y)$. 
	Then we call $(a_1,\ldots,a_k)^{\mathfrak B}$ 
	the \emph{configuration of $(x,y)$}. 
	If $a_1, \ldots, a_k\in X\subseteq A_0$ then $(a_1,\ldots,a_k)$ is called an \emph{$X$-configuration}.
\end{definition}

We specialise the concept of \emph{canonical functions} (see, e.g.,~\cite{BodPin-CanonicalFunctions}) to our setting. 

\begin{definition}
	Let $f$ be a $k$-ary operation on $B$.
	Let $X \subseteq A_0$ and let $T$ be the set of all $X$-configurations. Then $f$ is called \emph{$X$-canonical} if there exists a map $\overline{f}\colon T \rightarrow A_0$ such that for 
	every $(a_1,\dots,a_k) \in T$ and $(x,y) \in (a_1, \ldots, a_k)^{\mathfrak B}$ we have $(f(x),f(y)) \in \big (\overline{f}(a_1, \ldots, a_k)\big)^\mathfrak{B}$.
	If $X=A_0$ then $f$ is called \emph{canonical}.
\end{definition}



An operation $f \colon B^n \to B$ is called \emph{conservative} 
if for all $x_1,\dots,x_n \in B$ 
$$f(x_1,\dots,x_n) \in \{x_1,\dots,x_n\}.$$
If $\mathfrak B$ is a finite structure such that every
polymorphism of $\mathfrak B$ is conservative, then
$\Csp(\mathfrak B)$ has been classified already before
the proof of the Feder-Vardi conjecture, and there are several proofs~\cite{Conservative,Bulatov-Conservative-Revisited,Barto-Conservative}.
The polymorphisms of normal representations of finite relation algebras satisfy a strong property that resembles conservativity. 

\begin{proposition}
	Let $\mathfrak{B}$ be a normal representation. Then every $f \in \Pol^{(n)}$ 
	is \emph{edge-conservative}, that is,  for all $x,y \in B^n$ with configuration 
	$(a_1, \ldots, a_n)^{\mathfrak B}$ it holds that
	$$(f(x), f(y)) \in \left (\bigcup_{i\in \{1,\ldots,n \}} a_i \right)^\mathfrak{B}.$$
\end{proposition}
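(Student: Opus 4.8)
The plan is to reduce the claim directly to the defining property of a polymorphism, namely that $f$ preserves every named relation of $\mathfrak B$. The essential observation is that the union $b := \bigcup_{i\in\{1,\dots,n\}} a_i$ is itself an element of the relation algebra $\mathbf A$: the atoms $a_1,\dots,a_n$ lie in $A$, and $A$ is closed under the Boolean join $\cup$, so $b \in A$. Hence $b^{\mathfrak B}$ is one of the binary relations naming the $A$-structure $\mathfrak B$. Since the representation map $a \mapsto a^{\mathfrak B}$ is an isomorphism of relation algebras, it commutes with $\cup$, and therefore $b^{\mathfrak B} = \bigcup_{i} a_i^{\mathfrak B}$.

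Next I would record an order-preservation step. In the Boolean algebra reduct of $\mathbf A$ we have $a_i \subseteq b$ for every $i$, and because the representation is an isomorphism this yields $a_i^{\mathfrak B} \subseteq b^{\mathfrak B}$. Writing $x = (x_1,\dots,x_n)$ and $y = (y_1,\dots,y_n)$, the hypothesis that $(a_1,\dots,a_n)^{\mathfrak B}$ is the configuration of $(x,y)$ means precisely that $(x_i,y_i) \in a_i^{\mathfrak B}$ for each $i$. Combined with the inclusion just noted, each pair $(x_i,y_i)$ lies in $b^{\mathfrak B}$.

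The argument then closes by invoking the polymorphism property. Since $f \in \Pol^{(n)}(\mathfrak B)$ and $b^{\mathfrak B}$ is a relation of $\mathfrak B$, the operation $f$ preserves $b^{\mathfrak B}$. Applying $f$ component-wise to the $n$ tuples $(x_1,y_1),\dots,(x_n,y_n)$, all of which lie in $b^{\mathfrak B}$, we obtain that $\bigl(f(x_1,\dots,x_n),\, f(y_1,\dots,y_n)\bigr) = (f(x),f(y))$ lies in $b^{\mathfrak B} = \bigl(\bigcup_{i} a_i\bigr)^{\mathfrak B}$, which is exactly the asserted conclusion. There is no substantial obstacle; the only point requiring attention is the first step, namely recognising that the union of the atoms appearing in the configuration is again an element of $\mathbf A$ and hence a relation of $\mathfrak B$ to which preservation applies. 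The remainder is a routine unwinding of the definitions of configuration, polymorphism, and representation.
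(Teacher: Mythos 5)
Your proposal is correct and follows essentially the same route as the paper's own proof: observe that $b := \bigcup_i a_i$ is an element of $\mathbf A$ and hence a named relation of $\mathfrak B$, that each $(x_i,y_i)$ lies in $b^{\mathfrak B}$ by the configuration hypothesis, and conclude by the preservation property of the polymorphism $f$. The extra detail you supply (closure of $A$ under $\cup$ and the order-preservation of the representation isomorphism) is exactly what the paper's phrase ``$b$ is part of the signature of $\mathfrak B$'' silently relies on, so nothing is missing or different in substance.
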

\begin{proof}
	By definition, $b := \bigcup_{i\in \{1,\ldots,n \}} a_i $ is part of the signature of $\mathfrak{B}$. 
	Moreover, for every $i \in \{1,\ldots, n\}$ we have that $(x_i,y_i) \in b^\mathfrak{B}$ by the assumption on the configuration of $x$ and $y$. 
	Then $(f(x),f(y)) \in b^{\mathfrak{B}}$ because $f$ preserves $b^{\mathfrak B}$. 
\qed\end{proof}

\section{Finitely Many Equivalence Classes}
\label{sect:fin}
In the following, $\bf A$ denotes a finite
relation algebra with a normal representation $\mathfrak{B}$. 


\begin{theorem}\label{hardness:finitely many classes}
	Suppose that $e \in A$ is such that $e^\mathfrak{B}$ is a non-trivial equivalence relation with finitely many classes.
	Then $\Csp(\mathfrak{B})$ is NP-complete.
\end{theorem}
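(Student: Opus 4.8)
The plan is to exploit that $e^{\mathfrak B}$, being one of the relations of $\mathfrak B$, is preserved by every polymorphism; hence it is a \emph{congruence} of the clone $\Pol(\mathfrak B)$, and by hypothesis it has finitely many classes. Writing $C := B/e^{\mathfrak B}$ for the finite set of classes, every $f \in \Pol^{(n)}(\mathfrak B)$ induces a well-defined operation $\bar f \colon C^n \to C$ on the quotient, and $f \mapsto \bar f$ is a clone homomorphism $\mu \colon \Pol(\mathfrak B) \to \mathscr C$ onto a subclone $\mathscr C$ of $\mathcal O_C$. Such a homomorphism is automatically minor-preserving, and it is uniformly continuous by Lemma~\ref{u.c.}: for $n$-ary operations the value of $\bar f$ on a tuple of classes depends only on the value of $f$ on a single representative tuple, so letting $F$ collect the (finitely many) coordinates of one representative per class-tuple, $f|_F = g|_F$ forces $\bar f = \bar g$. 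Since $\Csp(\mathfrak B)$ lies in NP, Theorem~\ref{hardness infinite ucmp} reduces the whole task to producing a uniformly continuous minor-preserving map $\Pol(\mathfrak B) \to \proj$. Composing with $\mu$ (a composite of minor-preserving maps is minor-preserving, and a composite of uniformly continuous maps is uniformly continuous), it therefore suffices to exhibit a minor-preserving map $\mathscr C \to \proj$.

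By the contrapositive of Theorem~\ref{existence cyclic}, this last task amounts to showing that $\mathscr C$ contains no cyclic operation of some prime arity $p > |C|$. The first step is to observe that $\mathscr C$ preserves the inequality relation $\neq_C$ on $C$: indeed $\bar e \in A$, so $f$ preserves $\bar e^{\mathfrak B}$, and this says exactly that if $x_i,y_i$ lie in different classes for every coordinate $i$, then $f(x),f(y)$ lie in different classes. Consequently $\mathscr C \subseteq \Pol(K_{|C|})$, where $K_m$ denotes the structure $(C;\neq_C)$, i.e.\ the complete graph on the classes.

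When $|C| \ge 3$ this already closes the argument. The CSP of $K_m$ is $m$-colouring, a core with no cyclic polymorphism, so by Theorem~\ref{existence cyclic} there is a minor-preserving map $\Pol(K_{|C|}) \to \proj$; restricting it to the subclone $\mathscr C$ gives the required map $\mathscr C \to \proj$, and we are done.

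The step I expect to be the main obstacle is the case $|C| = 2$. Here $\neq_C$ is not enough: $\Pol(K_2)$ contains cyclic operations (for instance a majority operation), and $2$-colouring is tractable, so the inclusion $\mathscr C \subseteq \Pol(K_2)$ carries no hardness by itself. This is precisely where the \emph{non-triviality} of $e$ must enter, namely that some class contains two distinct $e$-related elements $u \ne v$, witnessing a within-class atom $a \le e$ with $a \ne \id$. The plan is to combine this atom with edge-conservativity to show that $\mathscr C$ is a \emph{proper} subclone of $\Pol(K_2)$ that still maps onto $\proj$ --- intuitively, that the fine structure inside a class is incompatible with a cyclic operation on the two classes. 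Turning the within-class atom $a$ into an actual minor-preserving map $\mathscr C \to \proj$ (equivalently, ruling out every cyclic $\bar g$ in the quotient) is the technical heart of the proof and the part I expect to require the most care, since the quotient construction alone discards exactly the information needed to distinguish this case from tractable $2$-colouring.
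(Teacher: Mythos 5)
Your quotient construction of $\mu\colon \Pol(\mathfrak B)\to\mathscr C$, together with its minor-preservation and uniform continuity via Lemma~\ref{u.c.}, is exactly what the paper does. Your treatment of $|C|\ge 3$ is also sound, though packaged differently: the paper does not pass through $\Pol(K_m)$ but directly exhibits, for a prime $p>m$, a pair of $p$-tuples of representatives (an alternating word $(c_1,c_2,\dots,c_1,c_2,c_3)$ and its cyclic shift) whose coordinatewise configuration is $(n,\dots,n)$ with $n:=1\setminus e$, so that preservation of $n^{\mathfrak B}$ contradicts the $e$-relatedness forced by cyclicity. Your route via ``$K_m$ has no cyclic polymorphism for $m\ge 3$'' is essentially the same computation quoted as a known fact, and is fine.

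The genuine gap is the case $|C|=2$, which you explicitly leave as a ``plan''. This is not a routine detail: it is the only place where the non-triviality of $e$ enters, and it is where the paper does its real work. By Theorem~\ref{existence cyclic} it suffices to rule out a single \emph{ternary} cyclic $\overline f\in\mathscr C$ (one prime $p>2$ is enough; you need not rule out every cyclic operation). The paper's argument: pick $c_1\ne c_1'$ in the same class and $c_2$ in the other class. Cyclicity gives $\big(f(c_1,c_1,c_2),f(c_2,c_1,c_1)\big)\in e^{\mathfrak B}$, while the configuration of this pair of triples is $(n,\id,n)$, so edge-conservativity puts the image pair in $(n\cup\id)^{\mathfrak B}$; since $e\cap(n\cup\id)=\id$ the two values are in fact \emph{equal}. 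The same reasoning applied to $f(c_2,c_1,c_1)$ and $f(c_1',c_2,c_1)$ (using that $f$ preserves $e^{\mathfrak B}$ to replace $c_1$ by $c_1'$, and the configuration $(n,n,\id)$) forces these to be equal as well, whence $f(c_1,c_1,c_2)=f(c_1',c_2,c_1)$. But the configuration of $\big((c_1,c_1,c_2),(c_1',c_2,c_1)\big)$ consists of atoms none of which is $\id$ (its first coordinate lies in $e$ but not in $\id$), so edge-conservativity forces $f(c_1,c_1,c_2)\ne f(c_1',c_2,c_1)$, a contradiction. Until you supply an argument of this kind, your proof establishes the theorem only for $|C|\ge 3$.
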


\begin{proof}
	We use the notation 
	$n:= 1\setminus e$. 
	Let $\{c_1, \ldots, c_m\}$ be a set of representatives of the equivalence classes of $e^\mathfrak{B}$. We denote the equivalence class of $c_i$ by $\overline{c_i}$. 
A $k$-ary polymorphism $f \in \pol(\mathfrak{B})$ induces an operation $\overline{f}$ of arity $k$ on $C=\{\overline{c_1},\ldots,\overline{c_m} \}$ in the following way:
	$$\overline{f}(\overline{d_1}, \ldots \overline{d_k}):=\overline{ f(d_1, \ldots d_k) }$$
		for all $\overline{d_1}, \ldots \overline{d_k} \in \{\overline{c_1}, \ldots \overline{c_m}\}$. This definition is independent from the choice of the representatives since the polymorphisms preserve the relation $e^\mathfrak{B}$.  We denote the set of all operations that are induced in this way by operations from  $\pol(\mathfrak{B})$ by $\mathscr{C}$. It is easy to see that $\mathscr{C}$ is an operation clone on a finite set. Moreover, the mapping $\mu \colon  \pol(\mathfrak{B}) \rightarrow \mathscr{C}$ defined by $\mu(f):= \overline{f}$ is a minor-preserving map. To show that $\mu$ is uniformly continuous, we use Lemma~\ref{u.c.}; it suffices to observe that if two $k$-ary operations $f,g \in \Pol(\mathfrak{B})$ are equal on $F := \{c_1, \ldots, c_m\}$, then they induce the same operation on the equivalence classes.

\medskip 

	Suppose for contradiction that $\mathscr{C}$ contains a $p$-ary cyclic operation 
	for every prime $p>m$. 
	\medskip 
	
	
\underline{Case 1:}  $m=2$. 
By assumption there exists a ternary cyclic operation $\overline{f} \in \mathscr{C}$. 
Since $e^{{\mathfrak B}}$ is non-trivial,
one of the equivalence classes of $e^{\mathfrak B}$ must have size at least two. 
So we may without loss of generality assume that $\overline{c_1}$ contains at least two elements. 
Let $c_1' \in \overline{c_1}$ with $c_1 \not = c_1'$. 
	We have that $\overline{ f(c_{1}, c_{1}, c_2 ) } =\overline{ f(c_{2}, c_{1},c_1) }$ 
	which means that
	\begin{align}
	\big( f(c_{1}, c_{1}, c_2 ) ,f(c_{2},c_1,  c_1) \big) \in e^\mathfrak{B}. \label{eq:one}
	\end{align}
	 On the other hand $(n,\id,n)^{\mathfrak B} \big( (c_{1}, c_{1}, c_2 ),  (c_{2}, c_{1},c_1)\big)$.
Since $f$ is an edge conservative polymorphism we have that 
\begin{align}
\big(f(c_{1}, c_{1}, c_2 ),f(c_{2}, c_{1},c_1) \big) \in (n \cup \id)^{\mathfrak B}. \label{eq:two}
\end{align}
Combining (\ref{eq:one}) and (\ref{eq:two})
we obtain that 
\begin{align}
f(c_{1}, c_{1}, c_2 ) = f(c_{2},c_1,  c_1) . \label{eq:combine} 
\end{align}

	Similarly, $\overline{ f(c_{2}, c_{1}, c_1 ) } =\overline{ f(c_{1}, c_{2},c_1) }$. Since $f$ preserves the equivalence relation $e^{\mathfrak B}$ we also have $ \big(f(c_{1}, c_{2},c_1) , f(c'_{1}, c_{2},c_1) \big) \in  e^\mathfrak{B}$.
	But then $ (f(c_{2}, c_{1},c_1) , f(c'_{1}, c_{2},c_1) ) \in  e^\mathfrak{B}$  holds.
	Also note that $(n,n,\id)^{\mathfrak B} \big( (c_{2}, c_{1}, c_1 ),  (c'_{1}, c_{2},c_1)  \big)$ implies
	that $\big(f(c_{2}, c_{1}, c_1 ),f(c'_{1}, c_{2},c_1)\big) \in (n \cup \id)^{\mathfrak B}$. 
	These two facts together imply $f(c_{2}, c_{1},c_1) =f(c'_{1}, c_{2},c_1)$. 
	By (\ref{eq:combine}) 
	and the transitivity of equality we get 
	$f(c_{1}, c_{1},c_2) = f(c'_{1}, c_{2},c_1)$.   
	But this is impossible because 
	$(e,n,n)^{\mathfrak B} \big( (c_{1}, c_{1}, c_2 ),  (c'_{1}, c_{2},c_1)\big)$
	implies that $f(c_{1}, c_{1},c_2) \neq f(c'_{1}, c_{2},c_1)$.
	
	\medskip 
	\underline{Case 2:}  $m>2$. 	Let $f$ be a $p$-ary cyclic operation for some prime $p>m$. Consider the representatives $c_1, c_2 $ and $c_3$.
	By the cyclicity of $\overline f$
	we have 
	$$\overline{ f(c_{1}, c_{2},\ldots, c_1,c_2 ,c_3 ) } =\overline{ f(c_{3}, c_{1},c_2\ldots,  c_1,c_2) }$$
	and therefore  
			\begin{align}
\big (f(c_{1}, c_{2},\ldots, c_1,c_2 ,c_3 ) , f(c_{3}, c_{1},c_2\ldots,  c_1,c_2) \big ) \in e^\mathfrak{B}.
	\label{eq:equal}
	\end{align}
 On the other hand, 
	 $$(n,n,n, \ldots, n,n)^{\mathfrak B} \big ((c_{1}, c_{2},\ldots, c_1,c_2 ,c_3 ) , (c_{3}, c_{1},c_2\ldots,  c_1,c_2) \big ) $$
	and since $f$ preserves $n^{\mathfrak B}$ we get that 
	$$\big (f(c_{1}, c_{2},\ldots, c_1,c_2 ,c_3 ) , f(c_{3}, c_{1},c_2\ldots,  c_1,c_2) \big) \in n^\mathfrak{B},$$ contradicting (\ref{eq:equal}). 
	
	%
	%

	We showed that there exists a prime $p>m$ such  that $\mathscr{C}$ does not contain a $p$-ary cyclic polymorphism and therefore Theorem \ref{existence cyclic} implies the existence of a (uniformly continuous) minor-preserving map $\nu \colon \mathscr{C} \rightarrow \proj$. Since the composition of uniformly continuous minor-preserving maps is again uniformly continuous and minor-preserving,
	there exists a uniformly continuous minor-preserving
	map $\nu \circ \mu \colon \pol(\mathfrak{B}) \rightarrow \proj$.
	This map implies the NP-hardness of $\Csp(\mathfrak{B})$ by Theorem \ref{hardness infinite ucmp}. \qed\end{proof}

\section{No Non-Trivial Equivalence Relations}
\label{sect:primitive}
In this section $\bf A$ denotes
a finite relation algebra with a normal representation $\mathfrak B$ with $|B|>2$. 

\begin{definition} 
	The automorphism group  $\Aut({\mathfrak C})$ of a relational structure $\mathfrak C$ is called \emph{primitive} if $\Aut({\mathfrak C})$ does not preserve a non-trivial equivalence relation, i.e., the only equivalence relations that are preserved by $\Aut({\mathfrak C})$ are $\id$ and $C^2$.

\end{definition}

\begin{proposition}\label{prop:transitive} Let $a$ be an atom of $\bf A$. 
If $\Aut({\mathfrak B})$ is primitive then $a \subseteq \id$ implies  $a=\id$.
\end{proposition}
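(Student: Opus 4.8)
The plan is to proceed by contradiction: assuming $\Aut(\mathfrak{B})$ is primitive, $a \subseteq \id$, but $a \neq \id$, I would construct a non-trivial equivalence relation preserved by $\Aut(\mathfrak{B})$. The starting observation is that an atom below the identity is supported on a subset of the diagonal. Since $a \subseteq \id$, the relation $a^{\mathfrak{B}}$ has the form $\{(x,x) \mid x \in S\}$ for the set $S := \{x \in B \mid (x,x) \in a^{\mathfrak{B}}\}$. Because $a$ is an atom it is non-zero, so $a^{\mathfrak{B}} \neq \emptyset$ and $S \neq \emptyset$; and because the map $a \mapsto a^{\mathfrak{B}}$ is an isomorphism, hence injective, the assumption $a \neq \id$ gives $a^{\mathfrak{B}} \subsetneq \id^{\mathfrak{B}}$ and thus $S \subsetneq B$. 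So $S$ is a proper non-empty subset of $B$.

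Next I would transfer the invariance of the relation $a^{\mathfrak{B}}$ to the set $S$. Every $\alpha \in \Aut(\mathfrak{B})$ preserves $a^{\mathfrak{B}}$, since $a$ is one of the relation symbols of $\mathfrak{B}$; hence $(x,x) \in a^{\mathfrak{B}}$ implies $(\alpha(x),\alpha(x)) \in a^{\mathfrak{B}}$, and applying the same reasoning to $\alpha^{-1}$ yields $\alpha(S) = S$. Thus $S$, and therefore its complement $B \setminus S$, are preserved setwise. I would then consider the two-block partition $\{S, B \setminus S\}$ and the associated equivalence relation
$$R := S^2 \cup (B \setminus S)^2.$$
This is clearly reflexive, symmetric and transitive, and since each $\alpha \in \Aut(\mathfrak{B})$ maps $S$ to $S$ and $B \setminus S$ to $B \setminus S$, it preserves $R$.

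It remains to verify that $R$ is non-trivial, and this is the only point I expect to require care, as it is exactly where the standing hypothesis $|B| > 2$ of the section is used. As $S$ is non-empty and proper, both blocks are non-empty, so $R \neq B^2$. On the other hand, $R = \id$ would force both $S$ and $B \setminus S$ to be singletons, i.e. $|B| = 2$, which is excluded; hence $R \neq \id$. Therefore $R$ is a non-trivial equivalence relation preserved by $\Aut(\mathfrak{B})$, contradicting primitivity, and we conclude $a = \id$. I do not anticipate a genuine obstacle: the whole content lies in the translation ``atom below $\id$ corresponds to an invariant subset of $B$'', and the sole subtlety is that ruling out $R = \id$ relies precisely on $|B| > 2$.
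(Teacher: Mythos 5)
Your proof is correct, but it takes a different route from the paper's. The paper's argument is a one-liner entirely inside the relation algebra: it sets $c := \id \cup (a \circ 1 \circ a)$ and observes that $c^{\mathfrak B}$ is a non-trivial equivalence relation; unwinding the definitions (using that $\mathfrak B$ is square, so $1^{\mathfrak B} = B^2$), one finds $c^{\mathfrak B} = \id^{\mathfrak B} \cup S^2$ for your set $S$, i.e.\ the partition into the block $S$ and singletons. Because $c$ is a term in the algebra, $c^{\mathfrak B}$ is a relation of $\mathfrak B$ and hence invariant under $\Aut(\mathfrak B)$ for free, whereas you work with the two-block partition $\{S, B\setminus S\}$ and must (and do) verify invariance of $S$ by hand. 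What your version buys is completeness in the degenerate case $|S|=1$: there the paper's $c^{\mathfrak B}$ collapses to $\id^{\mathfrak B}$ and witnesses nothing, while your relation $S^2 \cup (B\setminus S)^2$ remains non-trivial precisely because of the standing hypothesis $|B|>2$ --- you are right that this is exactly where that hypothesis enters, and indeed the statement fails without it. (If you wanted to stay closer to the paper's style, your $R$ is also expressible as an element of the algebra, namely $(a\circ 1\circ a)\cup\bigl((\id\cap\bar a)\circ 1\circ(\id\cap\bar a)\bigr)$, which would again make invariance automatic.)
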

\begin{proof}
	
If  $a \subsetneq\id$ then 
$$c := \id \cup (a \circ 1 \circ a)$$


would be such that $c^{\mathfrak B}$ is a non-trivial
equivalence relation.  
\qed\end{proof}

\begin{proposition}\label{prop: kein k2} Let $a$ be a symmetric atom of $\bf A$ with $a \cap \id = 0$. 
	If $\Aut({\mathfrak B})$ is primitive then 	$a^{\mathfrak B} \circ a^{\mathfrak B} \not = \id$.
	
\end{proposition}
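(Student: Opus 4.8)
The plan is to argue by contraposition, following the template of Proposition~\ref{prop:transitive}: assuming $a^{\mathfrak B} \circ a^{\mathfrak B} = \id$, I will exhibit a single non-trivial equivalence relation that is preserved by $\Aut(\mathfrak B)$, contradicting primitivity. The candidate is the element $c := \id \cup a \in A$, and the whole argument reduces to checking that $c^{\mathfrak B}$ is an equivalence relation which is neither $\id$ nor $1$.

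First I would translate the hypothesis into the algebra: since the representation map is an isomorphism, $a^{\mathfrak B} \circ a^{\mathfrak B} = \id^{\mathfrak B}$ is equivalent to $a \circ a = \id$ in $\mathbf A$. I then verify that $c$ is an equivalence element by algebraic manipulation. Reflexivity is immediate from $\id \subseteq c$; symmetry follows from $c^\smile = \id \cup a^\smile = \id \cup a = c$ using that $a$ is symmetric; and transitivity follows by distributing composition over union,
\[ c \circ c = (\id \cup a) \circ (\id \cup a) = \id \cup a \cup a \cup (a\circ a) = \id \cup a = c, \]
where the last step uses $a \circ a = \id$. Hence $c^{\mathfrak B}$ is reflexive, symmetric and transitive on $B$ (here squareness guarantees that $\id^{\mathfrak B}$ is the full diagonal), so it is an equivalence relation.

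The two non-triviality checks are where the hypotheses $a \cap \id = 0$, $a \neq 0$, and $|B| > 2$ enter, and the upper bound is the delicate one. That $c \neq \id$ is easy: $a$ is a non-zero atom disjoint from $\id$, so $a \not\subseteq \id$ and therefore $c = \id \cup a \supsetneq \id$. To see $c \neq 1$ I would use $|B| > 2$ together with the observation that $a^{\mathfrak B}\circ a^{\mathfrak B} = \id^{\mathfrak B}$ forces each element to have at most one $a$-neighbour: if some $y$ were $a$-related to distinct $x,z$, then $(x,z) \in a^{\mathfrak B}\circ a^{\mathfrak B}=\id^{\mathfrak B}$, contradicting $x \neq z$. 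If we had $c = 1$, then every pair of distinct elements would lie in $a^{\mathfrak B}$; picking three distinct elements $x,y,z$ (possible as $|B|>2$) would give $x$ two distinct $a$-neighbours $y,z$, a contradiction. I expect this step---ruling out $c = 1$ and pinning down exactly where $|B|>2$ is needed---to be the main obstacle, since it is the only place the domain-size assumption is used and a careless argument would wrongly conclude the statement for $|B|=2$ as well.

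Finally, since $c = \id \cup a \in A$, the relation $c^{\mathfrak B}$ is one of the named relations of $\mathfrak B$ and is therefore preserved by every automorphism of $\mathfrak B$. Thus $\Aut(\mathfrak B)$ preserves the non-trivial equivalence relation $c^{\mathfrak B}$, contradicting primitivity. This establishes the contrapositive and hence the proposition.
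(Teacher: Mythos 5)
Your proposal is correct and follows essentially the same route as the paper: both show that under the assumption $a^{\mathfrak B}\circ a^{\mathfrak B}=\id^{\mathfrak B}$ the relation $(\id\cup a)^{\mathfrak B}$ is an equivalence relation, rule out that it equals $\id^{\mathfrak B}$ (since $a$ is a nonzero atom disjoint from $\id$), and use $|B|>2$ to derive a contradiction from it being all of $B^2$ via the ``at most one $a$-neighbour'' observation. You merely spell out the symmetry/transitivity checks and the non-triviality bounds that the paper leaves implicit.
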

\begin{proof}Assume for contradiction 	$a^{\mathfrak B} \circ a^{\mathfrak B}  = \id^\mathfrak{B} $.
	This implies 	$(\id \cup  a)^{\mathfrak B} \circ (\id \cup a)^{\mathfrak B} \subset (\id \cup a)^\mathfrak{B} $ and therefore $(\id \cup  a)^{\mathfrak B}$ is an equivalence relation. Since $\mathfrak{B}$ is primitive $(\id \cup  a)^{\mathfrak B}=B^2$. By assumption $B$ contains at least $3$ elements. These elements are now all connected by the atomic relation $ a^{\mathfrak B}$. This is a contradiction to our assumption 	$a^{\mathfrak B} \circ a^{\mathfrak B}  = \id^\mathfrak{B} $. \qed	\end{proof}
Higman's lemma states that a permutation group
$G$ 
on a set $B$  
is primitive if and only if for every two distinct elements $x,y \in B$ the undirected graph with vertex set
$B$ and edge set $\big \{\{\alpha(x),\alpha(y)\} \mid \alpha \in G \big \}$ is connected (see, e.g.,~\cite{CameronPermutationGroups}). 
We need the following variant of this
result for $\Aut({\mathfrak B})$; we also present its proof since we are unaware of any reference in the literature. 
If $a \in A$ then a sequence $(b_0,\dots,b_n) \in B^{n+1}$
is called an \emph{$a$-walk (of length $n$)} if $(b_i,b_{i+1}) \in a^{\mathfrak B}$ for every $i \in \{0,\dots,n-1\}$ (we count the number of traversed edges rather than the number of vertices when defining the length). 


\begin{lemma}\label{pathexistence}
	Let $a \in A$ be a symmetric atom of $\bf A$ with $a \cap \id = 0$
	and suppose that $\Aut(\mathfrak B)$ is primitive. Then there exists an $a^\mathfrak{B}$-walk of
	even length between any $x,y \in B$. 
	Moreover, there exists 
	$k\in \mathbb{N}$ such that for all $x,y \in B$ there exists an $a^\mathfrak{B}$-walk of length $2k$ between $x$ and $y$.
\end{lemma}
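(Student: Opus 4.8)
The plan is to work with the undirected graph $G$ whose vertex set is $B$ and whose edge set is $a^{\mathfrak B}$. As $a$ is symmetric this relation is symmetric, as $a\cap\id=0$ it is irreflexive, and as $a$ is an atom it is nonempty, so $G$ is a loopless graph with at least one edge. The first step is to show that $G$ is connected. Under primitivity Proposition~\ref{prop:transitive} gives that $\id$ is itself an atom, so for any two edges $(x,y),(x',y')\in a^{\mathfrak B}$ the map sending $x\mapsto x'$ and $y\mapsto y'$ is an isomorphism of the induced substructures (the atom $a$ holds on $(x,y)$, on $(x',y')$ and on their converses, and $\id$ holds on all diagonal pairs). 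By homogeneity this isomorphism extends to an automorphism of $\mathfrak B$, so $a^{\mathfrak B}$ is a single orbit of $\Aut(\mathfrak B)$ on pairs, and $G$ coincides with the orbital graph $\{\{\alpha(x),\alpha(y)\}\mid\alpha\in\Aut(\mathfrak B)\}$ of any fixed edge $\{x,y\}$. Higman's lemma then yields that $G$ is connected.

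Next I would prove that $G$ is not bipartite. If it were, then by connectedness its bipartition into two nonempty classes $B_1$ and $B_2$ is unique up to exchanging the classes, so every automorphism of $\mathfrak B$ either fixes or swaps $B_1$ and $B_2$; hence the relation ``belonging to the same class'' is an equivalence relation preserved by $\Aut(\mathfrak B)$. As $|B|>2$, one of the classes has at least two elements, so this relation differs from both $\id$ and $B^2$, contradicting primitivity. Thus $G$ contains a closed walk of odd length. The first assertion now follows: given $x,y\in B$, connectedness provides some $a^{\mathfrak B}$-walk from $x$ to $y$, and if its length is odd I splice in a detour to the odd closed walk (go from $y$ to its base point, traverse the odd closed walk once, and return along the same edges), which changes the length by an odd amount and hence produces an $a^{\mathfrak B}$-walk of even length.

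For the second assertion, for each $k\in\mathbb N$ let $R_{2k}$ be the binary relation of all pairs $(x,y)$ that are joined by an $a^{\mathfrak B}$-walk of length exactly $2k$. Each $R_{2k}$ is preserved by $\Aut(\mathfrak B)$, and since $\mathfrak B$ is homogeneous every such binary relation is a union of orbits of pairs, that is, a union of atoms of $\mathbf A$. Because $G$ is connected with at least one edge, every vertex has a neighbour, so appending an out-and-back step shows $R_{2k}\subseteq R_{2(k+1)}$; thus $(R_{2k})_{k}$ is an increasing chain of unions of atoms whose union, by the first assertion, is all of $B^2$. Since $\mathbf A$ has only finitely many atoms this chain stabilises after finitely many steps, so $R_{2K}=B^2$ for some $K\in\mathbb N$, and $k:=K$ witnesses the claim.

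I expect the passage from the first to the second assertion to be the main obstacle: because $B$ may be infinite one cannot argue by finiteness of $B$, and the decisive observation is that each length-$2k$ connectivity relation is $\Aut(\mathfrak B)$-invariant and therefore a union of the finitely many atoms, which converts the problem into a finite stabilisation argument. The bipartiteness step, where primitivity together with $|B|>2$ rules out a two-class congruence, also needs care.
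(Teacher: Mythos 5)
Your proof is correct, but it takes a genuinely different route from the paper's. The paper works with the single relation $L_n := \id^{\mathfrak B} \cup \bigcup_{j=1}^{n}(a^{\mathfrak B})^{2j}$: since every $\Aut(\mathfrak B)$-invariant binary relation is a union of the finitely many atoms, the non-decreasing chain $(L_n)$ stabilises at some $L_k$, which is exactly the equivalence relation ``joined by an $a^{\mathfrak B}$-walk of even length''; primitivity forces $L_k$ to be trivial, and the degenerate case $L_k=\id^{\mathfrak B}$ is excluded because $a^{\mathfrak B}\circ a^{\mathfrak B}$ is a nonempty subrelation of $L_k$, so Proposition~\ref{prop:transitive} would give $a^{\mathfrak B}\circ a^{\mathfrak B}=\id^{\mathfrak B}$, contradicting Proposition~\ref{prop: kein k2}; both assertions of the lemma fall out of this one stabilisation argument. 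You instead decompose the statement graph-theoretically: connectivity of the graph of $a^{\mathfrak B}$ via Higman's lemma (using homogeneity and the fact that $\id$ is an atom to see that $a^{\mathfrak B}$ is a single orbital), non-bipartiteness because the canonical bipartition of a connected bipartite graph would be an $\Aut(\mathfrak B)$-invariant non-trivial equivalence relation (here $|B|>2$ is needed and is indeed assumed throughout this section), parity correction by splicing in an odd closed walk, and a separate stabilisation of the relations $R_{2k}$ for the uniform bound. Your route avoids Proposition~\ref{prop: kein k2} entirely, replacing it by the bipartiteness argument, at the cost of invoking Higman's lemma, which the paper states only for context and deliberately does not use (it reproves the variant it needs from scratch). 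Both arguments ultimately rest on the same decisive finiteness observation --- that invariant binary relations are unions of the finitely many atoms --- which you correctly single out as the key to passing from mere existence of even walks to a uniform length bound.
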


\begin{proof}If $R$ is a binary relation then 
$R^k= R \circ R \circ \cdots \circ R$ denotes the $k$-th relational power of $R$. 
The sequence of binary relations $L_n :=\id^\mathfrak{B} \cup  \bigcup_{k=1}^n (a^\mathfrak{B})^{2k}$ is non-decreasing by definition and terminates because all binary relations are unions of at most finitely many atoms. Therefore, there exists $k \in \mathbb{N}$ such for all $n \geq k$ we have $L_n=L_k$.  Note that $L_k$ is an equivalence relation, namely the relation ``there exists an $a^\mathfrak{B}$-walk of even length between $x$ and $y$''. Since $\mathfrak{B} $ is primitive $L_k$ must be trivial. 
If $L_k = B^2$ then there exists an 
$a^\mathfrak{B}$-walk of length $2k$ between any two $x,y \in B$ and we are done. 
Otherwise, $$L_k = \{(x,x) \mid x \in B\} = \id^{\mathfrak B}.$$ 
Since $a$ is symmetric $a^{\mathfrak B} \circ a^{\mathfrak B} \not = 0$ and $a^{\mathfrak B} \circ a^{\mathfrak B}$ contains therefore an atom.
But then $a^{\mathfrak B} \circ a^{\mathfrak B}  \subseteq L_k$ implies by
 Proposition~\ref{prop:transitive} $a^{\mathfrak B} \circ a^{\mathfrak B} = L_k$. This is a contradiction to Proposition \ref{prop: kein k2}.
\qed\end{proof}

\begin{lemma}\label{canonicalPolys}
	Let $a \in A$ be a symmetric atom of $\bf A$
	such that $\Aut({\mathfrak B})$ is primitive and $(a,a,a)$ is forbidden. Then all polymorphisms of $\mathfrak{B}$ are $\{\id, a\}$-canonical.
\end{lemma}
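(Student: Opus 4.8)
The plan is to fix a $k$-ary polymorphism $f$ and, for each configuration $c=(a_1,\dots,a_k)\in\{\id,a\}^k$, to show that the atom of $(f(x),f(y))$ is the same for all pairs $(x,y)$ realising $c$; that common atom is then the value $\overline f(c)$. Write $I=\{i : a_i=a\}$ and $J=\{j : a_j=\id\}$, and let me write $\sim$ for the adjacency $a^{\mathfrak B}$. If $I=\emptyset$ then $x=y$ and the atom is $\id$; if $J=\emptyset$ then edge-conservativity forces the atom $a$; so the only issue is the \emph{mixed} case $I,J\neq\emptyset$, where edge-conservativity only pins the atom down to $(\id\cup a)^{\mathfrak B}$, i.e.\ to $\{\id,a\}$. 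Throughout I use that the graph $G:=a^{\mathfrak B}$ is loopless (since $a\cap\id=0$, which follows from Proposition~\ref{prop:transitive} together with $a\not\le a\circ a$), triangle-free (this is exactly $a\not\le a\circ a$), and connected (by primitivity, or directly from Lemma~\ref{pathexistence}); homogeneity makes $\Aut(\mathfrak B)$ transitive, so $G$ is regular, of degree $\ge 2$ because $|B|>2$.

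First I would prove \emph{edge-independence}: for a fixed configuration the atom does not depend on the choice of the $a$-edges at the coordinates in $I$. It suffices to change one endpoint of one such edge. If $(x,y)$ and $(x,y')$ agree in all coordinates except some $i\in I$, where $y_i,y_i'$ are distinct common neighbours of $x_i$, then triangle-freeness gives $y_i\not\sim y_i'$, so the configuration of $(y,y')$ carries an atom $b\notin\{\id,a\}$ at coordinate $i$ and $\id$ elsewhere, whence $(f(y),f(y'))\in(\id\cup b)^{\mathfrak B}$ by edge-conservativity. Since $f(x)$ is a \emph{shared} anchor point, a short four-case check, using that $\id,a,b$ are pairwise distinct atoms, yields $f(x)=f(y)\iff f(x)=f(y')$, so the atom is preserved. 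Line-graph connectivity of $G$ then lets me pass between any two $a$-edges at coordinate $i$, and the coordinates in $I$ are handled one at a time.

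The core is \emph{value-independence} at the coordinates in $J$, which I would prove by induction on $|J|$, the base case $|J|=0$ being the all-$a$ configuration above. For the inductive step, connectivity of $G$ reduces the claim to comparing the atoms for two values $s\sim s'$ at a single coordinate $j\in J$, all other data fixed. Let $P=f(x^{(s)}),Q=f(y^{(s)}),P'=f(x^{(s')}),Q'=f(y^{(s')})$, where $x^{(s)},y^{(s)}$ differ exactly on $I$. The decisive point is that the two \emph{cross} pairs $(x^{(s)},y^{(s')})$ and $(y^{(s)},x^{(s')})$ have configuration $c^{\times}$, obtained from $c$ by turning coordinate $j$ from $\id$ into $a$ (this is where $s\sim s'$ enters); this configuration has $|J|-1$ identity coordinates, so by edge-independence and the induction hypothesis $(P,Q')$ and $(Q,P')$ lie in one and the same atom $\gamma\in\{\id,a\}$. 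If $\gamma=\id$ then $P=Q'$ and $Q=P'$, and since $\id$ and $a$ are symmetric the atoms of $(P,Q)$ and $(P',Q')$ coincide. If $\gamma=a$ then, were the two atoms different, one of $(P,Q),(P',Q')$ would be $\id$ and the other $a$; identifying the coinciding pair and combining it with the two $a$-labelled cross edges produces three pairwise $a$-adjacent distinct points, i.e.\ a triangle, contradicting $a\not\le a\circ a$. Thus the atom agrees for $s$ and $s'$, and connectivity of $G$ completes the induction; together with edge-independence this shows the atom depends only on $c$.

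I expect the main obstacle to be precisely the value-independence step. Unlike changing an edge, changing a $\id$-value moves \emph{both} $x$ and $y$ simultaneously, so a naive single-move comparison can flip the behaviour between ``$=$'' and ``$a$''; this is why one cannot simply propagate along a path. The resolution is to feed the comparison the forced cross-adjacencies coming from $c^{\times}$ together with the inductive hypothesis, which converts the obstruction into a forbidden triangle exactly in the dangerous case $\gamma=a$. It is here that $|B|>2$ (via regularity and connectivity) and, crucially, the forbidden triple $(a,a,a)$ are used, while edge-conservativity and Lemma~\ref{pathexistence} supply the ambient constraints and the connectivity needed to reach arbitrary $\id$-values.
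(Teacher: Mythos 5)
Your route is genuinely different from the paper's. The paper introduces the primitive positive definable ternary relation $R=\big\{(x_1,x_2,x_3) : (a\cup\id)^{\mathfrak B}(x_1,x_2)\wedge(a\cup\id)^{\mathfrak B}(x_2,x_3)\wedge a^{\mathfrak B}(x_1,x_3)\big\}$, which because of the forbidden triple forces strict alternation of $a$ and $\id$, and then transports the behaviour of $f$ on $(x,y)$ to any other pair along an alternating walk assembled from the even $a$-walks of Lemma~\ref{pathexistence}. You instead perturb one coordinate at a time and control each elementary perturbation by edge-conservativity and triangle-freeness. Your global structure (the split into $I$ and $J$, edge-independence, then value-independence by induction on $|J|$ via the two cross pairs and a forbidden triangle) is sound; the four-case check and the $\gamma=\id$ versus $\gamma=a$ dichotomy are correct as stated.

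There is, however, one genuine gap: the connectivity claim for your elementary moves on \emph{ordered} $a$-edges. Your move fixes one endpoint and re-chooses the other, which is adjacency in the line graph of $G=a^{\mathfrak B}$ --- but the line graph governs \emph{unordered} edges. Following a line-graph path from $\{x_i,y_i\}$ to $\{u_i,v_i\}$ arrives with a determined orientation, and if $G$ were bipartite you could never reach $(y_i,x_i)$ from $(x_i,y_i)$ by such moves. This is not a corner case you may ignore: in your value-independence step the two cross pairs $(x^{(s)},y^{(s')})$ and $(y^{(s)},x^{(s')})$ realise $c^{\times}$ with \emph{opposite} orientations of the edge at every $i\in I$, so applying the induction hypothesis to both of them already presupposes that orientation reversal is reachable. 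The gap is repairable with tools you already cite: Lemma~\ref{pathexistence} gives an even $a$-walk between the two endpoints of any $a$-edge, hence an odd closed $a$-walk, so $G$ is non-bipartite; and driving an ordered edge once around an odd closed walk by single-endpoint moves reverses its orientation (on a $5$-cycle: $(c_0,c_1)\to(c_2,c_1)\to(c_2,c_3)\to(c_4,c_3)\to(c_4,c_0)\to(c_1,c_0)$). State and prove this, or replace the appeal to line-graph connectivity by it. A smaller point in the same spirit: ``degree $\ge 2$ because $|B|>2$'' should be routed through Proposition~\ref{prop: kein k2} (degree $1$ would give $a^{\mathfrak B}\circ a^{\mathfrak B}=\id^{\mathfrak B}$), rather than asserted.
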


In the proof, we need the following notation. 
Let $ a_1, \ldots, a_k \in A$ be such that $a_1=\ldots =a_j$ and $a_{j+1}=\ldots =a_k$. Instead of writing $(a_1, \ldots , a_n)^\mathfrak{B}$ we use the shortcut $(a_1 |_j  a_{j+1})^\mathfrak{B}$. 

\begin{proof}[of Lemma~\ref{canonicalPolys}]
The following ternary relation $R$ on $B$ is primitive positive definable in $\mathfrak B$. 
	$$R := \big \{(x_1,x_2,x_3) \in B^3 \mid (a \cup \id)^\mathfrak{B}(x_1,x_2)  \wedge (a \cup \id)^\mathfrak{B}(x_2,x_3) \wedge a^\mathfrak{B}(x_1,x_3) \big \}$$
	Observe that $c \in R$ if and only if
	$a^\mathfrak{B}(c_1,c_2) \wedge \id^\mathfrak{B}(c_2,c_3)$ or $
	\id^\mathfrak{B}(c_1,c_2) \wedge a^\mathfrak{B}(c_2,c_3)$.
	
	Let $f$ be a polymorphism of $\mathfrak{B}$ of arity $n$. 
	Let $x,y,u,v \in B^n$ be arbitrary such that $(x,y)$ 
	and $(u,v)$ have the same $\{\id,a\}$-configuration. 
	Without loss of generality we may
	 assume that $(a |_j\id )^\mathfrak{B}(x,y)$ and 
	 $(a |_j\id)^\mathfrak{B}(u,v)$. Now consider $p,q \in B^n$ such that  $(\id |_j a)^\mathfrak{B}(p,q)$ holds.

	 Note that  by the edge-conservativeness of $f$ the following holds:
	  $$(f(x),f(y)) \in( a \cup \id)^\mathfrak{B},(f(u),f(v)) \in( a \cup \id)^\mathfrak{B}\textup{~and~} (f(p),f(q)) \in( a \cup \id)^\mathfrak{B}.$$

	By Lemma \ref{pathexistence} there exists a $k\in \mathbb{N}$ such that for every $i \in \{1,\dots,n\}$ there exists an $a^\mathfrak{B}$-walk $(s^0_i, \ldots, s^k_i)$ with $s^0_i = y_i$ 
	and $s^k_i =p_i$. 
	Now consider the following walk in $B^n$:
	\begin{align*}
	&(a |_j\id )^\mathfrak{B}(x,y) \\
	&(\id  |_j a )^\mathfrak{B}\big(y,(s^0_1, \ldots s^0_j, s^1_{j+1},\ldots s^1_{n}  )  \big) \\
	&( a  |_j \id )^\mathfrak{B}\big((s^0_1, \ldots s^0_j, s^1_{j+1},\ldots s^1_{n}  ), (s^1_1, \ldots s^1_j, s^1_{j+1},\ldots s^1_{n}  ) \big) \\
	& \quad \vdots \\
	&( a  |_j \id )^\mathfrak{B}((s^i_1, \ldots s^i_j, s^{i+1}_{j+1},\ldots s^{i+1}_{n}  ), (s^{i+1}_1, \ldots s^{i+1}_j, s^{i+1}_{j+1},\ldots s^{i+1}_{n}  )    ) \\
	&(  \id  |_j a)^\mathfrak{B}( (s^{i+1}_1, \ldots s^{i+1}_j, s^{i+1}_{j+1},\ldots s^{i+1}_{n}  ) , (s^{i+1}_1, \ldots s^{i+1}_j, s^{i+2}_{j+1},\ldots s^{i+2}_{n}  )  ) \\
	& \quad \vdots \\
	&(a  |_j \id )^\mathfrak{B}((s_1^{k-1},\dots,s^{k-1}_j,s_{j+1}^{k},\dots,s^{k}_n),p) \\
	&(\id  |_j a )^\mathfrak{B}(p,q) 
	\end{align*}
	
	Every three consecutive elements on this walk are component wise in the relation $R$. Since $R$ is primitive positive definable the polymorphism $f$ preserves $R$ by Theorem \ref{polinv}. This means that $f$ maps this walk on a walk where the atomic relations are an alternating sequence of $a^\mathfrak{B}$ and $\id^\mathfrak{B}$, which implies
	$$ (f(x), f(y)) \in a^\mathfrak{B} \Leftrightarrow (f(p), f(q)) \in \id^\mathfrak{B}.$$
	
	If we repeat the same argument with a walk from $q$ to $v$ we get:
		$$ (f(p), f(q)) \in a^\mathfrak{B} \Leftrightarrow (f(u), f(v)) \in \id^\mathfrak{B}.$$
		
	Combining these two equivalences gives us 
	
		$$ (f(x), f(y)) \in a^\mathfrak{B} \Leftrightarrow (f(u), f(v)) \in a^\mathfrak{B}.$$

Since the tuples $x,y,u,v \in B^n$ were arbitrary this shows that $f$ is $\{\id,a\}$-canonical. \qed\end{proof}

\begin{theorem}\label{Hensonalg}
	Let $\Aut(\mathfrak{B})$ be  primitive and let $a$ be a symmetric atom of $\mathbf{A}$ such that $(a,a,a)$ is forbidden. Then $\Csp(\mathfrak{B})$ is NP-hard.
\end{theorem}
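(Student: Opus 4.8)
The plan is to mimic the strategy of Theorem~\ref{hardness:finitely many classes}: exhibit a uniformly continuous minor-preserving map $\pol(\mathfrak B)\to\proj$ and then invoke Theorem~\ref{hardness infinite ucmp}. By Lemma~\ref{canonicalPolys} every polymorphism $f$ is $\{\id,a\}$-canonical, so it induces a map $\overline f$ on the $\{\id,a\}$-configurations. By edge-conservativeness $\overline f(a_1,\dots,a_n)$ lies in $(\bigcup_i a_i)^{\mathfrak B}$, hence is again in $\{\id,a\}$, and $\overline f(\id,\dots,\id)=\id$, $\overline f(a,\dots,a)=a$. Thus $\mu\colon f\mapsto\overline f$ is a minor-preserving map from $\pol(\mathfrak B)$ into a finite operation clone $\mathscr D$ on the two-element set $\{\id,a\}$. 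Uniform continuity follows from Lemma~\ref{u.c.} exactly as in Theorem~\ref{hardness:finitely many classes}: fixing one edge $(P,Q)\in a^{\mathfrak B}$, the operation $\overline f$ is already determined by the restriction of $f$ to $\{P,Q\}^n$, since every $\{\id,a\}$-configuration is realized by a pair of tuples with entries in $\{P,Q\}$. It then suffices to show that $\mathscr D$ admits no cyclic operation of some prime arity: Theorem~\ref{existence cyclic} would give a minor-preserving (hence, on a finite domain, uniformly continuous) map $\nu\colon\mathscr D\to\proj$, and $\nu\circ\mu$ is the desired map.

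The geometric heart of the argument is that $(B;a^{\mathfrak B})$ is a loopless, non-bipartite graph. Looplessness means $a\cap\id=0$: if $a\subseteq\id$ then $a=\id$ by Proposition~\ref{prop:transitive}, but then $a\le a\circ a$ and $(a,a,a)$ would not be forbidden; so $a\not\le\id$ and, $a$ being an atom, $a\cap\id=0$. Non-bipartiteness comes from Lemma~\ref{pathexistence}: fixing an edge $(P,Q)\in a^{\mathfrak B}$, there is an $a$-walk of even length from $P$ to $Q$, and closing it with the edge $(Q,P)$ yields a closed $a$-walk of odd length at $P$. Inserting back-and-forth detours $P\sim P'\sim P$ along an incident edge changes the length by $2$, so closed $a$-walks exist at $P$ of every sufficiently large odd length, in particular of length $p$ for some prime $p>2$.

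I would then rule out a $p$-ary cyclic operation in $\mathscr D$ by the Hell--Ne\v{s}et\v{r}il loop argument. Let $X=(X_0,\dots,X_{p-1})$ list the vertices of a closed $a$-walk of length $p$, so $(X_i,X_{i+1})\in a^{\mathfrak B}$ for all $i$ modulo $p$, and let $Y=(X_1,\dots,X_{p-1},X_0)$ be its cyclic shift. Coordinatewise $(X,Y)$ has the all-$a$ configuration, so by canonicity $(f(X),f(Y))\in a^{\mathfrak B}$. If the cyclic element of $\mathscr D$ is realized by a polymorphism $f$ that is itself cyclic on $B$, then $f(Y)=f(X)$, whence $(f(X),f(X))\in a^{\mathfrak B}$, contradicting $a\cap\id=0$. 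This self-loop is the contradiction I am aiming for.

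The main obstacle is precisely the passage from a cyclic \emph{induced} operation $\overline f\in\mathscr D$ to a polymorphism $f$ that is genuinely cyclic on $B$: the loop argument needs $f(Y)=f(X)$, not merely that the two outputs have the same colour. At the level of $\{\id,a\}$-colours alone no contradiction can arise, because any realizable $\{\id,a\}$-colouring of a tuple of points uses at most two distinct points (three pairwise $a$-related points would form the forbidden triangle), so these colourings form an affine $GF(2)$-space, and the minority operation preserves all of them. Hence the exclusion of cyclic operations is a genuinely global phenomenon: it must exploit canonicity together with the homogeneity of $\mathfrak B$ to realize the cyclic element of $\mathscr D$ by a canonical polymorphism and then symmetrize it into a cyclic one (a Ramsey-type canonization step). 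Once such a cyclic $f$ is in hand, the closed-odd-walk computation closes the argument, and Theorems~\ref{existence cyclic} and~\ref{hardness infinite ucmp} deliver NP-hardness of $\Csp(\mathfrak B)$.
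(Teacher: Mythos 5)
Your setup---the minor-preserving map $\mu\colon f \mapsto \overline f$ into the induced clone on $\{\id,a\}$, its uniform continuity via restriction to $\{P,Q\}^n$, and the reduction to showing that this induced clone has no cyclic operation of some prime arity---coincides with the paper's. The gap is in that last step, and you have located it yourself: your loop argument needs $f(Y)=f(X)$ for a cyclic shift $Y$ of $X$, i.e.\ a polymorphism that is cyclic \emph{as an operation on $B$}, and no such polymorphism is available; only the induced operation $\overline f$ on the two-element set is known to be cyclic. The proposed repair---realizing the cyclic element of the induced clone by a genuinely cyclic polymorphism of $\mathfrak{B}$ via a ``Ramsey-type canonization step''---is not carried out and is not available here: the paper never establishes a Ramsey property for $\mathfrak{B}$, canonization produces canonical functions rather than cyclic ones, and lifting a cyclic operation from a finite quotient to a cyclic polymorphism of the infinite structure is exactly the kind of step that fails in general. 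So the central combinatorial fact remains unproved in your write-up.

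The paper closes this step with a far more elementary argument that uses only the cyclicity of $\overline s$, never of $s$. The three $\{\id,a\}$-configurations $(a,a,\id)$, $(\id,a,a)$, $(a,\id,a)$ are cyclic shifts of one another and are simultaneously realized on the three edges of a single triangle $x,y,z \in B^3$ (by full universality and the symmetry of $a$); cyclicity of $\overline s$ forces the image edges $(s(x),s(y))$, $(s(y),s(z))$, $(s(x),s(z))$ to carry the same colour, the all-$a$ option is a forbidden $a$-triangle, hence $s(x)=s(y)=s(z)$. This alone is not yet a contradiction---consistent with your observation that the realizable $\{\id,a\}$-colourings are preserved by minority---so the paper brings in a third atom: there is an atom $b\neq\id$ with $(a,a,b)$ allowed (otherwise $\id\cup a$ would be a non-trivial equivalence relation, contradicting primitivity), and on a triangle $u,v,w$ with configurations $(a,a,\id)$, $(\id,a,a)$, $(a,b,a)$, canonicity forces $s(u)=s(v)=s(w)$ while edge-conservativeness forces $(s(u),s(w))\in(a\cup b)^{\mathfrak{B}}$, which excludes $\id$. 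This refutes a ternary cyclic operation in the induced clone directly, with no loop lemma, no closed odd walks, and no need for Lemma~\ref{pathexistence} at this stage (that lemma does its work earlier, inside the proof of Lemma~\ref{canonicalPolys}). To complete your proof you would need either to carry out this kind of two-triangle argument or to actually justify the lifting step you sketch; as written, neither is done.
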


\begin{proof}
	By Lemma \ref{canonicalPolys} we know that all polymorphisms of $\mathfrak{B}$ are $\{a,\id\}$-canonical. This means that every $f \in \pol(\mathfrak{B})$  induces an operation $\overline{f}$ of the same arity on the set $\{a,\id\}$.
	Let $\mathscr{C}_2$ be 
	 the set  of  induced operations. Note that $\mathscr{C}_2$ is an operation clone on a Boolean domain. 
	The mapping $\mu\colon \pol(\mathfrak{B} ) \rightarrow \mathscr{C}_2$ defined by $\mu(f):=\overline{f}$ is a uniformly continuous minor-preserving map.

	Assume for contradiction that there exists a ternary cyclic polymorphism $\overline{s}$ in $ \mathscr{C}_2 $. 
	Let $x,y,z \in B^3$ be such that

\begin{align*}
& (a,a,\id)^\mathfrak{B}( x,y), \\
& (\id,a,a)^\mathfrak{B}(y,z), \\
\textup{and~~}& (a,\id,a)^\mathfrak{B}(x,z).
\end{align*}

By the cyclicity of the operation $\overline{s}$  and the edge-conservativeness of $s$ we have that either
$$(s(x),s(y)) \in a^\mathfrak{B}, (s(y),s(z)) \in a^\mathfrak{B} \textup{~and~}   (s(x),s(z)) \in a^\mathfrak{B}$$
or
$$(s(x),s(y)) \in \id^\mathfrak{B}, (s(y),s(z)) \in \id^\mathfrak{B} \textup{~and~}   (s(x),s(z)) \in \id^\mathfrak{B}.$$
Since $(a,a,a)$ is forbidden, the second case holds.
Note that $\bf A$ must have an atom $b \neq \id$ such that the triple $(a,a,b)$ is allowed, because otherwise $a$ would be an equivalence relation.
Now consider $u,v,w \in B^3$ such that
\begin{align*}
& (a,a,\id)^\mathfrak{B}(u,v), \\
& (\id,a,a)^\mathfrak{B}(v,w), \\
\textup{and~~}& (a,b,a)^\mathfrak{B}(u,w).
\end{align*}
Since $s$ is $\{a,\id \}$-canonical and with the observation from before we have $$(s(u),s(v))\in \id^\mathfrak{B} \textup{~and~} (s(v),s(w))\in \id^\mathfrak{B}.$$
Now the transitivity  of equality contradicts $ (s(u),s(w)) \in (a\cup b)^\mathfrak{B}$.

We conclude that $\mathscr{C}_2$ does not contain a ternary cyclic operation. 
Since the domain of $ \mathscr{C}_2 $ has size two, Theorem \ref{existence cyclic} implies the existence of a u.c.\ minor-preserving map $\nu \colon \mathscr{C}_2  \rightarrow \proj$. The composition $\nu \circ \mu \colon \pol(\mathfrak{B}) \rightarrow \proj$ is also a u.c.\ minor-preserving map and therefore by Theorem \ref{hardness infinite ucmp} the $\Csp(\mathfrak{B})$ is NP-hard. \qed\end{proof}

\section{Examples}
\label{sect:expl}

\begin{figure}[t]
	\begin{center}
		
		\begin{minipage}[b]{40mm}
			\begin{center}
				
				\begin{tabular}{|c||c|c|c|}
					\hline 
					$~\circ~$	& $~\id~$ & $~a~$ & $~b~$ \\ 
					\hline \hline
					$\id$	&  $\id$ & $a$ &$b$  \\ 
					\hline 
					$a$	&$a$  & $\neg b$  & $b$ \\ 
					\hline 
					$b$	& $b$ & $b$ & $\neg b$ \\ 
					\hline 
				\end{tabular} 
				
			\end{center}
		\end{minipage}
		\begin{minipage}[b]{40mm}
			\begin{center}
				
				\begin{tabular}{|c||c|c|c|}
					\hline 
					$~\circ~$	& $~\id~$ & $~a~$ & $~b~$ \\ 
					\hline \hline
					$\id$	&  $\id$ & $a$ &$b$  \\ 
					\hline 
					$a$	&$a$  & $\neg a$  & $0'$ \\ 
					\hline 
					$b$	& $b$ & $0'$ & $1$ \\ 
					\hline 
				\end{tabular} 
			\end{center}
		\end{minipage}
	\end{center}
	\caption{Multiplication tables of relation algebras $\#13$ (left) and $\#17$ (right).}
	\label{Tables}
\end{figure}

Andr\'eka  and Maddux classified \textit{small relation algebras}, i.e., finite relation algebras with at most 3 atoms \cite{AndrekaMaddux}. We consider the complexity of the network satisfaction problem of two of them, namely the relation algebras $\#13$ and $\#17$ (we use the enumeration from \cite{AndrekaMaddux}). Both relation algebras have normal representations (see below) and fall into the scope of our hardness criteria. Cristani and Hirsch \cite{HirschCristiani} classified the complexities of the network satisfaction problems for small relation algebras, but due to a mistake the algebras $\#13$ and $\#17$ were left open. 

\begin{example}[Relation Algebra $\#13$]
		The relation algebra $\#13$ is given by the multiplication table in Fig. \ref{Tables}. This finite relation algebra has a normal representation $\mathfrak{B}$ defined as follows. Let $V_1$ and $V_2$ be countable, disjoint sets. We set $B:= V_1 \cup V_2$ and define the following atomic relations: 
	\begin{align*}
	&\id^\mathfrak{B} := \{(x,x) \in B^2 \}, \\
	& a^\mathfrak{B} := \{(x,y) \in B^2\setminus \id^\mathfrak{B}  \mid (x\in V_1 \wedge y\in V_1) \vee (x\in V_2\wedge y \in V_2) \},\\
		& b^\mathfrak{B} := \{(x,y) \in B^2 \setminus \id^\mathfrak{B} \mid (x\in V_1 \wedge y\in V_2) \vee (x\in V_2\wedge y \in V_1) \}.
	\end{align*}
	It is easy to check that this structure is a square representation for $\#13$. Moreover, this structure is fully universal for $\#13$ and homogeneous, and therefore a normal representation. 
	
	Note that the relation $(\id \cup~a)^\mathfrak{B}$ is an equivalence relation where $V_1$ and $V_2$ are the two equivalence classes. Therefore we get by Theorem \ref{hardness:finitely many classes} that the (general) network satisfaction problem for the relation algebra $\#13$ is NP-hard.
	We mention that this result can also be deduced from the results in~\cite{BMPP16}. 
	\end{example}

\begin{example}[Relation Algebra $\#17$]
	The relation algebra $\#17$ is given by the multiplication table in Fig. \ref{Tables}. Let $\mathfrak{N}=(V; E^\mathfrak{N})$  be the countable, homogeneous, universal triangle-free, undirected graph (see~\cite{Hodges}), also called  called a \emph{Henson graph}.
	We use this Henson graph to obtain a square representation $\mathfrak{B}$ with domain $V$ for the relation algebra $\#17$ as follows:
	\begin{align*}
	&\id^\mathfrak{B} := \{(x,x) \in V^2 \}, \\
& a^\mathfrak{B} := \{(x,y) \in V^2  \mid (x,y)\in E^\mathfrak{N} \},\\
& b^\mathfrak{B} := \{(x,y) \in B^2 \setminus \id^\mathfrak{B} \mid (x,y)\not \in E^\mathfrak{N} \}.
	\end{align*}
	This structure is homogeneous and fully universal since $\mathfrak{N}$ is homogeneous and embeds every triangle free graph. 
	 It is easy to see that there exists no non-trivial equivalence relation in this relation algebra. For the atom $a$ the triangle $(a,a,a)$ is forbidden, which means we can apply Theorem \ref{Hensonalg} and get NP-hardness for the (general) network satisfaction problem for the relation algebra $\#17$. 	Also in this case,
	 the hardness result can also be deduced from the results in~\cite{BMPP16}. 
\end{example}

\section{Conclusion and Future Work}
To the best of our knowledge the computational complexity of the (general) network satisfaction problem was previously only known for a small number of isolated finite relation algebras, for example the point algebra, Allens interval algebra,  or the 18 small relation algebras from~\cite{AndrekaMaddux}. Both of our criteria, Theorem \ref{hardness:finitely many classes} and Theorem \ref{Hensonalg}, show the NP-hardness for relatively large classes of finite relation algebras. In Section~\ref{sect:expl} we applied
these results to settle the
complexity status of two problems that were left open in~\cite{HirschCristiani}. 

To obtain our general hardness conditions we used the universal algebraic approach for studying the complexity of constraint satisfaction problems. This approach will hopefully lead to a solution of Hirsch's RBCP for
all finite relation algebras $\bf A$ with a normal representation $\mathfrak B$. It is also relatively easy to prove
that the network satisfaction problem for $\bf A$
is NP-complete if $\mathfrak B$ has an equivalence relation with an equivalence class of finite size larger than two.  
Hence, the next steps that have to be taken with this approach are the following. 
\begin{itemize}
\item Classify the complexity of the network satisfaction problem for finite relation algebras
$\bf A$ where the normal representation has a primitive automorphism group.
\item Classify the complexity of the network satisfaction problem for relation algebras
that have equivalence relations with infinitely many
classes of size two.
\item Classify the complexity of the network satisfaction problem for relation algebras
that have equivalence relations with infinitely many infinite
classes. 
\end{itemize}

\bibliographystyle{alpha}

\bibliography{local.bib}

\end{document}